\definecolor{seagreen}{rgb}{0.18, 0.55, 0.34}
\definecolor{royalpurple}{rgb}{0.47,0.32,0.66}
\definecolor{brown(traditional)}{rgb}{0.59, 0.29, 0.0}
\definecolor{blue}{rgb}{0.3, 0.2, 0.9}
\newtheorem{theorem}{Theorem}
\newtheorem{remark}{Remark}
\DeclareMathOperator*{\argmin}{argmin}
\DeclareMathOperator*{\argmax}{argmax}
\begin{document} 

\title{
HybridRAG-based LLM Agents for Low-Carbon Optimization in Low-Altitude Economy Networks
}

\author{
Jinbo Wen, Cheng Su, Jiawen Kang, Jiangtian Nie, Yang Zhang, Jianhang Tang, \textit{Member, IEEE}, \\ Dusit Niyato, \textit{Fellow, IEEE}, and Chau Yuen, \textit{Fellow, IEEE}
\thanks{J. Wen and Y. Zhang are with the College of Computer Science and Technology, Nanjing University of Aeronautics and Astronautics, China (e-mails: jinbo1608@nuaa.edu.cn; yangzhang@nuaa.edu.cn). 
C. Su and J. Kang are with the School of Automation, Guangdong University of Technology, China (e-mails: chengsu9251@163.com; kavinkang@gdut.edu.cn).
J. Nie is with the School of Computer Science and Engineering, Nanyang Technological University, Singapore (e-mail: jnie001@e.ntu.edu.sg).
J. Tang is with the State Key Laboratory of Public Big Data, Guizhou University, China (e-mail: jhtang@gzu.edu.cn).
D. Niyato is with the College of Computing and Data Science, Nanyang Technological University, Singapore (e-mail: dniyato@ntu.edu.sg).
C. Yuen is with the School of Electrical and Electronics Engineering, Nanyang Technological University, Singapore (e-mail: chau.yuen@ntu.edu.sg).
}

}

\maketitle

\begin{abstract}
Low-Altitude Economy Networks (LAENets) are emerging as a promising paradigm to support various low-altitude services through integrated air-ground infrastructure. To satisfy low-latency and high-computation demands, the integration of Unmanned Aerial Vehicles (UAVs) with Mobile Edge Computing (MEC) systems plays a vital role, which offloads computing tasks from terminal devices to nearby UAVs, enabling flexible and resilient service provisions for ground users. To promote the development of LAENets, it is significant to achieve low-carbon multi-UAV-assisted MEC networks. However, several challenges hinder this implementation, including the complexity of multi-dimensional UAV modeling and the difficulty of multi-objective coupled optimization. To this end, this paper proposes a novel Retrieval Augmented Generation (RAG)-based Large Language Model (LLM) agent framework for model formulation. Specifically, we develop HybridRAG by combining KeywordRAG, VectorRAG, and GraphRAG, empowering LLM agents to efficiently retrieve structural information from expert databases and generate more accurate optimization problems compared with traditional RAG-based LLM agents. After customizing carbon emission optimization problems for multi-UAV-assisted MEC networks, we propose a Double Regularization Diffusion-enhanced Soft Actor-Critic (R\textsuperscript{2}DSAC) algorithm to solve the formulated multi-objective optimization problem. The R\textsuperscript{2}DSAC algorithm incorporates diffusion entropy regularization and action entropy regularization to improve the performance of the diffusion policy. Furthermore, we dynamically mask unimportant neurons in the actor network to reduce the carbon emissions associated with model training. Simulation results demonstrate the effectiveness and reliability of the proposed HybridRAG-based LLM agent framework and the R\textsuperscript{2}DSAC algorithm.
\end{abstract}

\begin{IEEEkeywords}
Low-altitude economy, low-carbon multi-UAV-assisted MEC networks, LLMs, HybridRAG, regularization diffusion models, deep reinforcement learning.
\end{IEEEkeywords}

\section{Introduction}

With the advancement of wireless communication and aerial vehicle technologies, Low-Altitude Economy Networks (LAENets) are emerging as a transformative paradigm, enabling dynamic, large-scale, and intelligent communication infrastructure within low-altitude airspace below 1,000 meters~\cite{cai2025secure, cai2025large}. The primary goal of LAENets is to harness low-flying equipment with high mobility and operational flexibility capabilities, such as Unmanned Aerial Vehicles (UAVs), to carry out a wide range of economic activities, including logistics delivery, intelligent transportation, and environmental surveillance, thereby generating significant commercial and societal value~\cite{cai2025large}. As a critical architecture in LAENets, UAV-assisted Mobile Edge Computing (MEC) networks enable low-latency intelligent services for ground users through offloading computing tasks to proximate UAVs, mitigating the inherent limitations of resource-constrained mobile devices~\cite{SunTMC, 10381761}.

Driven by the prosperous vision of LAENets~\cite{cai2025secure, cai2025large}, achieving low-carbon multi-UAV-assisted MEC networks has garnered substantial interest from academia and industry~\cite{10677514}. The reduction of carbon emissions in multi-UAV-assisted MEC networks becomes a central focus in this pursuit~\cite{10439631}. However, several challenges hinder its implementation: 
\begin{itemize}
    \item \textit{Challenge I: Complexity of Multi-dimensional UAV Modeling.} In multi-UAV-assisted MEC networks, the complexity of UAV mathematical modeling primarily arises from four interrelated dimensions~\cite{wang2024multiuavenabledmecnetworks, 10381761, 10134570}: mobility (e.g., trajectory and velocity), communication (e.g., channel power gain and interference), and energy consumption (e.g., propulsion, computation, and transmission). Each dimension requires precisely calibrated modeling to ensure the accuracy and reliability of UAV system representations, posing a significant challenge for both interdisciplinary researchers and newcomers to this field.
    
    \item \textit{Challenge II: Difficulty of Multi-objective Coupled Optimization.} Unlike single-UAV-assisted MEC networks, the coupling of multiple objective variables, such as task offloading, resource allocation, and trajectory planning, significantly increases the difficulty of carbon emission optimization in multi-UAV-assisted MEC networks~\cite{10439631}. Some researchers have adopted heuristic algorithms to solve multi-objective coupled optimization problems for energy consumption reduction~\cite{SunTMC, 10677514, 10134570, 10439631}, but many of these problems are NP-hard, making it difficult for heuristic algorithms to find reasonable solutions~\cite{10679152}. As an alternative, Deep Reinforcement Learning (DRL) algorithms have been utilized to tackle these joint optimization issues~\cite{wang2024multiuavenabledmecnetworks, 10381761, 10197291}, but they are prone to falling into suboptimal solution exploration due to the high dimensionality of the action space.
\end{itemize}

Large Language Model (LLM) agents with Retrieval Augmented Generation (RAG) techniques have been developed to support network optimization tasks~\cite{10679152, 10815045}, which are capable of generating effective optimization strategies through interactive sessions with human users. Specifically, by comprehending the context and deep meaning of natural language, LLM agents can generate accurate content leveraging RAG techniques to retrieve relevant information from external databases~\cite{lewis2020retrieval,gao2023retrieval}. Although RAG can refine the outputs of LLMs and mitigate hallucination issues, it exhibits limitations when applied to carbon emission optimization in multi-UAV-assisted MEC networks. \textit{First}, traditional RAG techniques are often inefficient in capturing structural and relational information~\cite{peng2024graphretrievalaugmentedgenerationsurvey}, such as the influence of Line-of-Sight (LoS) probability on Ground-to-Air (G2A) communications, which cannot be represented through semantic similarity alone. \textit{Second}, traditional RAG techniques lack the capability to comprehensively grasp global contextual information, as they typically retrieve only a limited subset of documents, especially constrained to those appearing at the beginning or end of the documents~\cite{xiong2024graphmeetsretrievalaugmented}. 

Fortunately, GraphRAG, as an innovative extension of RAG, provides a promising solution to overcome the inherent limitations of RAG~\cite{peng2024graphretrievalaugmentedgenerationsurvey,xiong2024graphmeetsretrievalaugmented}. In contrast to traditional RAG, GraphRAG can capture intertextual relationships and retrieve graph elements enriched with relational knowledge from a constructed graph database~\cite{peng2024graphretrievalaugmentedgenerationsurvey}, enabling more accurate and efficient retrieval of structured relational information. Therefore, to address \textit{Challenge I}, we propose a HybridRAG-based LLM agent framework that synergistically integrates RAG and GraphRAG techniques. By leveraging the contextual knowledge retrieval capability of RAG and the structured reasoning capability of GraphRAG over interconnected parameters~\cite{gao2023retrieval, peng2024graphretrievalaugmentedgenerationsurvey}, the proposed HybridRAG-based LLM agents can efficiently retrieve expert knowledge and generate accurate carbon emission optimization problems for multi-UAV-assisted MEC networks. To address \textit{Challenge II}, by leveraging the ability of diffusion models to capture high-dimensional and intricate features within network environments~\cite{HongyangTMC, 10517486}, we propose a double regularization diffusion-enhanced DRL algorithm to generate optimal strategies of multi-objective carbon emission optimization problems formulated by HybridRAG-based LLM agents. Our contributions are summarized as follows:
\begin{itemize}
    \item \textbf{HybridRAG-based LLM Agent Framework:} We develop a HybridRAG-based LLM agent framework for carbon emission optimization in multi-UAV-assisted MEC networks. Specifically, we develop HybridRAG by merging KeywordRAG, VectorRAG, and GraphRAG. Through the blend of vector and keyword retrieval, LLM agents can not only assess vast embedded knowledge but also retrieve expert knowledge from external documents. Moreover, we construct a structured and queryable knowledge graph, enabling the LLM agents to retrieve structural relational information, thereby generating more accurate carbon emission optimization problems for multi-UAV-assisted MEC networks. (For \textit{Challenge I})
    \item \textbf{Double Regularization Diffusion-enhanced DRL:} We propose a Double Regularization Diffusion-enhanced Soft Actor-Critic (R\textsuperscript{2}DSAC) algorithm to identify optimal strategies of the carbon emission optimization problem formulated by the HybridRAG-based LLM agent. Specifically, we employ diffusion models as the policy to generate strategies through forward and reverse processes. To enhance policy performance, we incorporate diffusion entropy regularization and action entropy regularization into the policy learning objective function. Moreover, we apply dynamic pruning techniques in the diffusion-based actor network to suppress the activity of unimportant neurons, thereby reducing the carbon emissions associated with model training. (For \textit{Challenge II})
    \item \textbf{Extensive Performance Evaluation:} To evaluate the performance of the developed HybridRAG, we utilize a fine-grained framework called RAGChecker\footnote{\url{https://github.com/amazon-science/RAGChecker/tree/main}}. Specifically, we comprehensively adopt three types of metrics, including overall, retriever, and generator metrics, to rigorously analyze the performance of HybridRAG. Simulation results demonstrate that our HybridRAG outperforms traditional RAG techniques. In addition, we compare the proposed R\textsuperscript{2}DSAC algorithm with several DRL benchmark algorithms and conduct the ablation experiment. Simulation results demonstrate the effectiveness of the proposed R\textsuperscript{2}DSAC algorithm for carbon emission optimization in multi-UAV-assisted MEC networks.
\end{itemize}

The rest of the paper is organized as follows: Section \ref{related_work} reviews the related work. Section \ref{HybridRAG_framework} introduces the proposed HybridRAG-based LLM agent framework for carbon emission optimization in multi-UAV-assisted MEC networks. In Section \ref{RDMSAC}, we present the architecture of the R\textsuperscript{2}DSAC algorithm. Section \ref{Simulation_result} conducts extensive simulations to demonstrate the effectiveness of the proposed HybridRAG and the R\textsuperscript{2}DSAC algorithm. In Section \ref{Conclusion}, we conclude the paper. 

\section{Related Work}\label{related_work}
\subsection{Energy-Efficient Multi-UAV-Assisted MEC Networks}
In multi-UAV-assisted MEC networks, task offloading has emerged as a prominent research focus in recent studies~\cite{10381761, wang2024multiuavenabledmecnetworks, 10522499, 10032196}. An increasing number of researchers are dedicating their efforts to optimizing resource allocation during the task offloading process, aiming to achieve energy-efficient multi-UAV-assisted MEC networks~\cite{10381761, SunTMC, 10677514, wang2024multiuavenabledmecnetworks, 10134570}. For instance, the authors in~\cite{10381761} formulated a mixed integer programming problem by jointly optimizing the trajectory design of UAVs, offloading decisions, and computing and communication resource management. In~\cite{SunTMC}, the authors formulated a multi-objective optimization problem with respect to computing resource allocation, task offloading, and UAV trajectory control, thereby reducing the total task completion delay and UAV energy consumption. In~\cite{10677514}, the authors jointly optimized the flight trajectories of UAVs, computing resource allocation, and task offloading scheduling, thereby minimizing the carbon emissions of blockchain-enabled UAV-assisted MEC systems.

However, the aforementioned studies typically construct intricate optimization problems manually, introducing a susceptibility to human errors that can compromise the precision of the devised strategy~\cite{10679152}. Moreover, a significant proportion of current researchers neglect the investigation of reducing the carbon emissions associated with multi-UAV-assisted MEC networks, which is a critical aspect in the context of sustainable low-altitude economies. To address this research gap, we propose leveraging LLM agents as auxiliary tools to formulate carbon emission optimization problems for multi-UAV-assisted MEC networks, ultimately enabling sustainable and low-carbon operation in such systems.

\subsection{RAG-based LLM Agents in Network Optimization}
As a core component of agentic Artificial Intelligence (AI), LLM agents are capable of creating novel text through \textit{chain-of-thought} reasoning~\cite{10.1145/3677052.3698671, lee2024hybgraghybridretrievalaugmentedgeneration, 10750803}. Empowered by RAG, which is an advanced technique for enhancing the reliability and accuracy of Generative AI (GenAI) models~\cite{gao2023retrieval}, LLM agents can generate highly accurate outputs by retrieving relevant contextual information from external databases. Some researchers have explored the applications of RAG-based LLM agents in network optimization~\cite{10679152, 10815045}. Specifically, the authors in~\cite{10679152} proposed a RAG-based GenAI agent framework to customize problem formulation in satellite communication networks. In~\cite{10815045}, the authors proposed an LLM-enabled carbon emission optimization framework for task offloading, which designed pluggable LLM and RAG modules to generate accurate and reliable carbon emission optimization problems. While RAG has demonstrated promise in static information retrieval tasks~\cite{10679152, 10815045, gao2023retrieval}, its performance often degrades in complex and dynamic optimization scenarios that require connecting disparate pieces of information. Therefore, we develop HybridRAG by integrating GraphRAG, KeywordRAG, and VectorRAG, thereby enhancing the robustness and adaptability of LLM agents to complex network optimization scenarios.

\subsection{Diffusion-based DRL Algorithms}
Diffusion models have gained significant research attention in DRL due to their exceptional expressiveness in policy representation and inherent multimodality for capturing diverse solution spaces~\cite{wang2023diffusionpoliciesexpressivepolicy, NEURIPS2023_d45e0bfb, HongyangTMC, wen2025diffusion}. For instance, the authors in~\cite{wang2023diffusionpoliciesexpressivepolicy} leveraged a conditional diffusion model to represent the Q-learning policy. In~\cite{NEURIPS2023_d45e0bfb}, the authors proposed an efficient diffusion policy that approximately constructs actions from corrupted ones during training. Diffusion-based DRL algorithms have been utilized for network optimization~\cite{HongyangTMC, 10815045, 10829636, 10841385}. For instance, the authors in~\cite{10815045} employed diffusion-based DRL algorithms to identify optimal carbon emission strategies for task offloading. In~\cite{10829636}, the authors proposed a multi-dimensional contract design policy based on diffusion models to generate optimal contracts, thereby enhancing the efficiency of embodied agent AI twin migrations. Inspired by the successful applications of diffusion models in network optimization~\cite{HongyangTMC}, we adopt diffusion-based DRL algorithms to solve the optimization problem formulated by the proposed HybridRAG-based LLM agent framework.

\section{HybridRAG-based LLM Agent Framework}\label{HybridRAG_framework}

In this section, we propose the HybridRAG-based LLM agent framework for carbon emission optimization in multi-UAV-assisted MEC networks.

\subsection{Expert Dataset Introduction}

\begin{figure}
    \centering
    \includegraphics[width=0.48\textwidth]{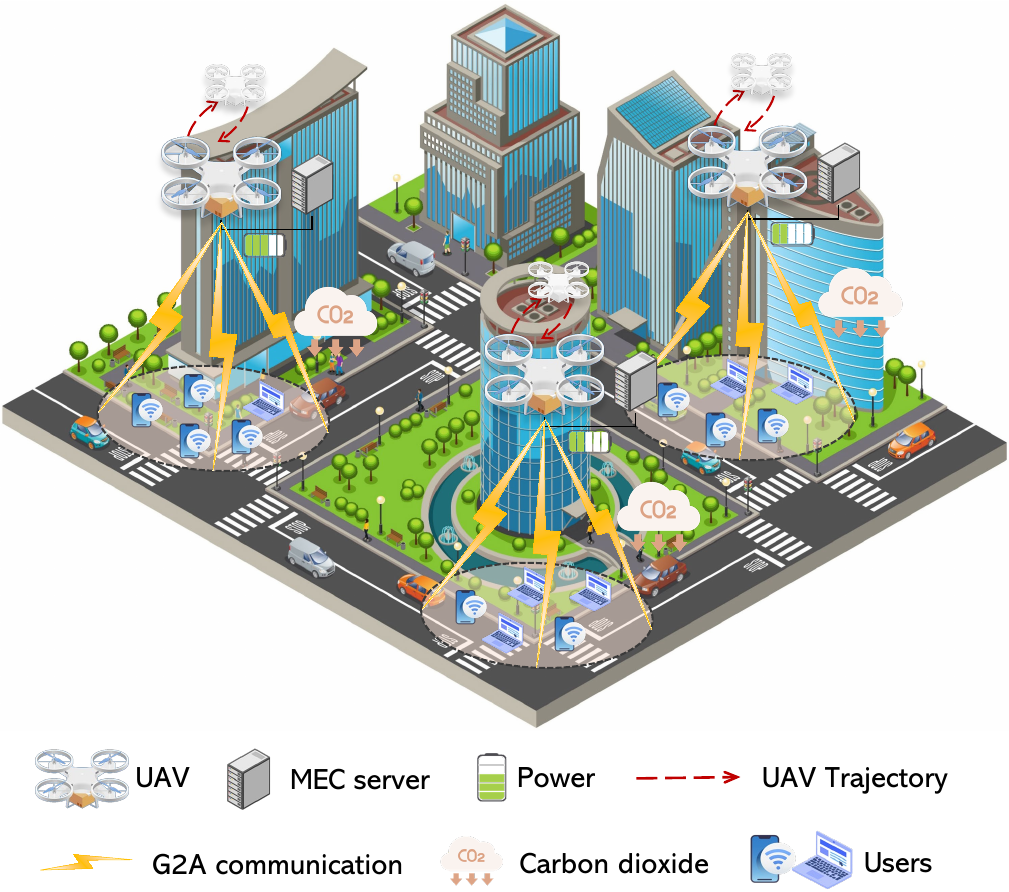}
    \caption{An illustration of LAENets, where UAVs emit carbon dioxide during computing service provisions.}
    \label{System_model}
\end{figure}

\subsubsection{System model}
As shown in Fig. \ref{System_model}, we consider a multi-UAV-assisted MEC network that consists of a set $\mathcal{M} = \{1,\ldots, m, \ldots, M\}$ of $M$ rotary-wing UAVs and a set $\mathcal{K} = \{1, \ldots, k, \ldots, K\}$ of $K$ users. Without loss of generality, the multi-UAV-assisted MEC network is under a continuous duration~\cite{9930881}, which is divided into a set $\mathcal{N} = \{1,\ldots,n,\ldots
,N\}$ of $N$ time slots with an equal time duration $\delta_t$. At each time slot, each user $k$ generates an indivisible computing-intensive task $\boldsymbol{I}_k(n) = (D_k(n), C_k(n))$~\cite{wang2024multiuavenabledmecnetworks, 10677514}, where $D_k(n)$ ($\rm{bits}$) represents the task size and $C_k(n)$ ($\rm{cycles/bit}$) represents the number of Central Processing Unit (CPU) cycles required for computing one bit of task data.

Due to the limited computing ability of user devices, each user can offload the generated task (e.g., healthy data analysis) to a rotary-wing UAV equipped with one or multiple MEC servers~\cite{10677514, SunTMC}. Compared with fixed-wing UAVs, rotary-wing UAVs can hover near user devices to provide computing services for users with better channel quality~\cite{9930881}. In addition, rotary-wing UAVs are generally equipped with multiple antennae~\cite{10381761}, and multiple users can communicate with UAVs through orthogonal frequency division multiple access at the same time~\cite{wang2024multiuavenabledmecnetworks}, indicating that interference between multiple users in the coverage area of a UAV can be ignored~\cite{wang2024multiuavenabledmecnetworks}.

\subsubsection{UAV mobility model}
We consider a three-dimensional Cartesian coordinate system in the multi-UAV-assisted MEC network, where the coordinate of each user $k$ at time slot $n$ is denoted as $\mathbf{v}_k(n) = [x_k(n), y_k(n), 0]^T$. In addition, we consider that UAVs hover at a certain height $H$~\cite{9930881}, and the coordinate of each UAV $m$ at time slot $n$ is denoted as $\mathbf{w}_m(n) = [x_m(n), y_m(n), H]^T$. To ensure that UAVs operate within the designated rectangular area, their positions must satisfy the following constraints, which are given by~\cite{ZhaoTWC}
\begin{equation}
    x_m(n) \in [0, X_{\rm{max}}],\: y_m(n)\in [0, Y_{\rm{max}}],\: \forall m \in \mathcal{M}, \forall n \in \mathcal{N},
\end{equation}
where $X_{\rm{max}}$ and $Y_{\rm{max}}$ are the side lengths of the designated rectangular area, respectively. Notably, the next positions of UAV $m$ is determined by its horizontal angle $\theta_m(n) \in [0, 2\pi)$ and its flight speed $v_m(n)$, and we can obtain
\begin{equation}
    x_m(n+1) = x_m(n) + \delta_t v_m(n) \cos(\theta_m(n)),\:\forall n \in \mathcal{N},
\end{equation}
\begin{equation}
    y_m(n+1) = y_m(n) + \delta_t v_m(n) \sin(\theta_m(n)),\:\forall n \in \mathcal{N}.
\end{equation}

The mobility model of UAVs is constrained by the maximum instantaneous speed $V_{\rm{max}}$ and the minimum collision avoidance distance $D_{\rm{min}}$~\cite{wang2024multiuavenabledmecnetworks, 10381761}, which are expressed as
\begin{equation}
    \frac{\lVert\mathbf{w}_m(n+1) - \mathbf{w}_m(n) \rVert}{\delta_t} \leq V_{\rm{max}},\: \forall m \in \mathcal{M}, \forall n \in \mathcal{N},
\end{equation}
\begin{equation}
    \lVert \mathbf{w}_m(n) - \mathbf{w}_{m^{\prime}}(n) \rVert \geq D_{\rm{min}},\: \forall m, m^{\prime} \in \mathcal{M}, m \neq m^{\prime},
\end{equation}
where $\lVert \cdot \rVert$ represents the Euclidean norm~\cite{9930881}. When user $k$ is within the coverage region of UAV $m$, the task $\boldsymbol{I}_k(n)$ generated by user $k$ can be offloaded to UAV $m$ at each time slot $n$, and the corresponding constraint is expressed as
\begin{equation}
    \lVert \mathbf{w}_m(n) - \mathbf{v}_k(n)\rVert^2 \leq r_{\rm{max}}^2 + H^2,\: \forall m \in \mathcal{M}, \forall k \in \mathcal{K},
\end{equation}
where $r_{\rm{max}}$ represents the coverage area radius of UAVs.

\subsubsection{Communication model}
In real environments, many different objects act as scatterers or obstacles during G2A communications~\cite{10134570}. Radio signals emitted by UAVs or user devices do not propagate in free space but may be affected by scattering or shadowing caused by objects~\cite{10134570}, resulting in additional path loss. Thus, we capture the G2A communication by a probabilistic path loss model rather than a simplified free path loss model~\cite{10381761, wang2024multiuavenabledmecnetworks, 10134570}. Specifically, the probabilistic path loss model considers the occurrence probability of each path and the path loss of LoS and Non-LoS (NLoS) links. The occurrence probabilities of LoS and NLoS communications between user $k$ and UAV $m$ at time slot $n$ are expressed as~\cite{10134570}
  \begin{align} 
    &P_{k,m}^{\rm{LoS}}(n) = \frac{1}{1 + a\exp(-b(\theta_{k,m}(n)-a))},\\
    &P_{k,m}^{\rm{NLoS}}(n) = 1 - P_{k,m}^{\rm{LoS}}(n),
  \end{align}
where $a$ and $b$ are constant parameters related to the environment, and $\theta_{k,m}(n) = \frac{180}{\pi}\arcsin\big(\frac{H}{\lVert \mathbf{w}_m(n) - \mathbf{v}_k(n)\rVert}\big)$ represents the elevation angle of user $k$ to UAV $m$. 

In the probabilistic path loss model, Free-Space Path Loss (FSPL) is the foundation of the signal attenuation in ideal conditions, and the FSPL for the communication between user $k$ and UAV $m$ is given by
\begin{equation}
    L_{k,m}(n) = 20\Big[\lg(\lVert \mathbf{w}_m(n) - \mathbf{v}_k(n)\rVert) + \lg(f) +  \lg\Big(\frac{4\pi}{c}\Big)\Big],
\end{equation}
where $f$ is the carrier frequency and $c = 3 \times 10^8\:\rm{m/s}$ is the speed of light. Thus, the average path loss for the G2A communication between user $k$ and UAV $m$ is given by~\cite{10381761}
\begin{equation}
\begin{split}
    \overline{PL}_{k,m}(n)(\rm{dB}) = &P_{k,m}^{\rm{LoS}}(n)(L_{k,m}(n) + \eta^{\rm{LoS}})\\
    &+ P_{k,m}^{\rm{NLoS}}(n)(L_{k,m}(n) + \eta^{\rm{NLoS}}),
\end{split}
\end{equation}
where $\eta^{\rm{LoS}}$ and $\eta^{\rm{NLoS}}$ are the excessive path loss for LoS and NLoS links, respectively. Therefore, the uplink transmission rate from user $k$ to UAV $m$ at time slot $n$ is given by~\cite{wang2024multiuavenabledmecnetworks}
\begin{equation}
    R_{k,m}(n) = B_{k,m}^{\rm{G2A}}\log_2\bigg(1+\frac{p_k(n)}{\overline{PL}_{k,m}(n)\delta^2}\bigg),
\end{equation}
where $B_{k,m}^{\rm{G2A}}$ is the uplink bandwidth from user $k$ to UAV $m$, $p_k(n)$ is the transmit power of user $k$, and $\delta^2$ is the additive Gaussian white noise power at the receiving end. Additionally, the total uplink bandwidth from users to UAV $m$ cannot exceed the maximum uplink bandwidth of G2A communications~\cite{wang2024multiuavenabledmecnetworks}, and the corresponding constraint is given by
\begin{equation}
    \sum_{k = 1}^K B_{k,m}^{\rm{G2A}} \leq B_{\rm{max}}^{\rm{G2A}},\: \forall m \in \mathcal{M}.
\end{equation}

The size of uploaded tasks is generally larger than the size of calculation results~\cite{10677514}. Therefore, without loss of generality, we do not consider the latency and energy consumption of downlink transmission~\cite{10677514, SunTMC}.

\subsubsection{Computing energy consumption model} The computing energy consumption model consists of two parts, i.e., the energy consumption for uplink transmission and the energy consumption for UAV calculating offloading tasks. The energy consumption for uplink transmission from user $k$ to UAV $m$ at time slot $n$ is expressed as~\cite{10381761}
\begin{equation}
    E_{k,m}^{\rm{Tran}} (n) = \frac{p_k(n)D_k(n)}{R_{k,m}(n)}.
\end{equation}
In addition, the energy consumption for UAV $m$ calculating the task $\boldsymbol{I}_k(n)$ offloaded by user $k$ at time slot $n$ is given by
\begin{equation}
    E_{k,m}^{\rm{Cal}} (n) = \epsilon_m D_k(n) C_k(n) (f_{k,m}(n))^2,
\end{equation}
where $\epsilon_m$ denotes the effective switching capacitance that depends on the CPU architecture of the MEC server installed on UAV $m$~\cite{SunTMC}, and $f_{k,m}(n)$ ($\rm{cycles/s}$) represents the computing capability allocated by UAV $m$ to user $k$ at time slot $n$. 

We denote $\alpha_{k,m}(n)$ as a task offloading variable and consider that each user only offloads one task to a UAV at each time slot~\cite{10677514}, and the corresponding constraint is given by
\begin{equation}
    \sum_{m=1}^M \alpha_{k,m}(n) = 1,\: \forall k \in \mathcal{K}, n \in \mathcal{N},
\end{equation}
where $\alpha_{k,m}(n) = 1$ indicates that user $k$ offloads a computing task to UAV $m$ at time slot $n$, and otherwise $\alpha_{k,m}(n) = 0$.

\subsubsection{UAV propulsion consumption model}
The propulsion energy consumption of UAV $m$ flying at speed $v_m(n)$ at time slot $n$ is expressed as~\cite{10677514, 9930881, SunTMC}
\begin{equation}
\begin{split}
   E_{m}^{\rm{Fly}} (n) = &\delta_t \Bigg[ P_0\bigg(1+\frac{3(v_m(n))^2}{(U^{\rm{Tip}})^2}\bigg) + \frac{1}{2} d_0 \zeta s A (v_m(n))^3\\
   &+ P_1\sqrt{\sqrt{1+\frac{(v_m(n))^4}{4v_0^4}}-\frac{(v_m(n))^2}{2v_0^2}}
   \Bigg],
\end{split}
\end{equation}
where $P_0$ and $P_1$ represent the blade profile power and the induced power for hovering, respectively, $U^{\rm{Tip}}$ represents the blade tip speed of the rotor blade, $d_0$ represents the UAV fuselage drag ratio, $\zeta$ represents the air density, $s$ represents the rotor solidity, $A$ represents the rotor disk area, and $v_0$ represents the average rotor induced speed.

\subsubsection{Optimization model}
The total energy consumption of the multi-UAV-assisted MEC network is expressed as
\begin{equation}
\begin{split}
  E^{\rm{Total}} = &\sum_{n=1}^N\sum_{m=1}^M\sum_{k=1}^K \alpha_{k,m}(n) \Big[E_{k,m}^{\rm{Tran}} (n) + E_{k,m}^{\rm{Cal}} (n)\Big]\\
  &+\sum_{n=1}^N\sum_{m=1}^M E_{m}^{\rm{Fly}} (n).
\end{split}
\end{equation}

The optimization model in this paper is to minimize the carbon emissions of the multi-UAV-assisted MEC network, thereby promoting LAENets. According to~\cite{10677514, GOODCHILD201858}, the carbon emissions of the multi-UAV-assisted MEC network are related to the total energy consumption of the multi-UAV-assisted MEC network, which can be expressed as
\begin{equation}
    C^{\rm{Total}} = \varsigma^{\rm{Carbon}} \cdot \tau \cdot E^{\rm{Total}},
\end{equation}
where $\varsigma^{\rm{Carbon}} = 3.773 \times 10^{-4}$ represents that the power generation equipment of a UAV will generate $3.773 \times 10^{-4}\: \rm{kg}$ when consuming $1$ Watt-hour (Wh) of electricity, and $\tau$ denotes the conversion coefficient between Whs and joules.

\begin{remark}
Network designers need to carefully configure the above six models to manually construct a reliable carbon emission optimization problem for multi-UAV-assisted MEC networks~\cite{10679152}. Nevertheless, for network designers unfamiliar with UAVs or carbon emission reduction, there may be some human errors in formulating the carbon emission optimization problem~\cite{10679152, 10815045}, such as ignoring UAV flight propulsion consumption or the conversion coefficient between Whs and joules, affecting the accuracy of the strategies derived from the constructed optimization problem for reducing carbon emissions in the multi-UAV-assisted MEC network.
\end{remark}

RAG-based LLM agents have been developed to assist network designers in formulating accurate network optimization problems~\cite{10815045, 10679152}, which can generate optimization problems through multiple interactions with network designers. Although traditional RAG excels in quickly generating coherent responses from related textual documents~\cite{10.1145/3677052.3698671}, it may lose critical contextual information due to paragraph-level chunking. Fortunately, GraphRAG, as an innovative RAG technique, takes into account the interconnections between textual documents, allowing for more accurate and comprehensive retrieval of relational information~\cite{xiong2024graphmeetsretrievalaugmented}. However, GraphRAG enhances contextual understanding while resulting in higher token and time costs compared with traditional RAG. Moreover, both traditional RAG and GraphRAG cannot effectively perform \textit{hybrid question} answering~\cite{lee2024hybgraghybridretrievalaugmentedgeneration}, especially in complex network optimization scenarios, where hybrid questions require both textual and relational information to be answered correctly~\cite{lee2024hybgraghybridretrievalaugmentedgeneration}. For example, in the multi-UAV-assisted MEC network, ``\textit{the data transmission rate between user $k$ and UAV $m$ at time slot $t$}'' is the textual information, and ``\textit{the G2A communication link}'' is the relational information. To this end, we develop a HybridRAG-based LLM agent by merging KeywordRAG, VectorRAG, and GraphRAG. 

\begin{remark}
By combining the advantages of both traditional RAG and GraphRAG, the HybridRAG-based LLM agent can effectively analyze the retrieved expert knowledge and accurately generate the configurations of each model based on the hybrid questions from network designers. In particular, the developed HybridRAG-based LLM agent framework can adapt to various network optimization tasks through the corresponding external databases.
\end{remark}

\begin{figure}
    \centering
    \includegraphics[width=0.48\textwidth]{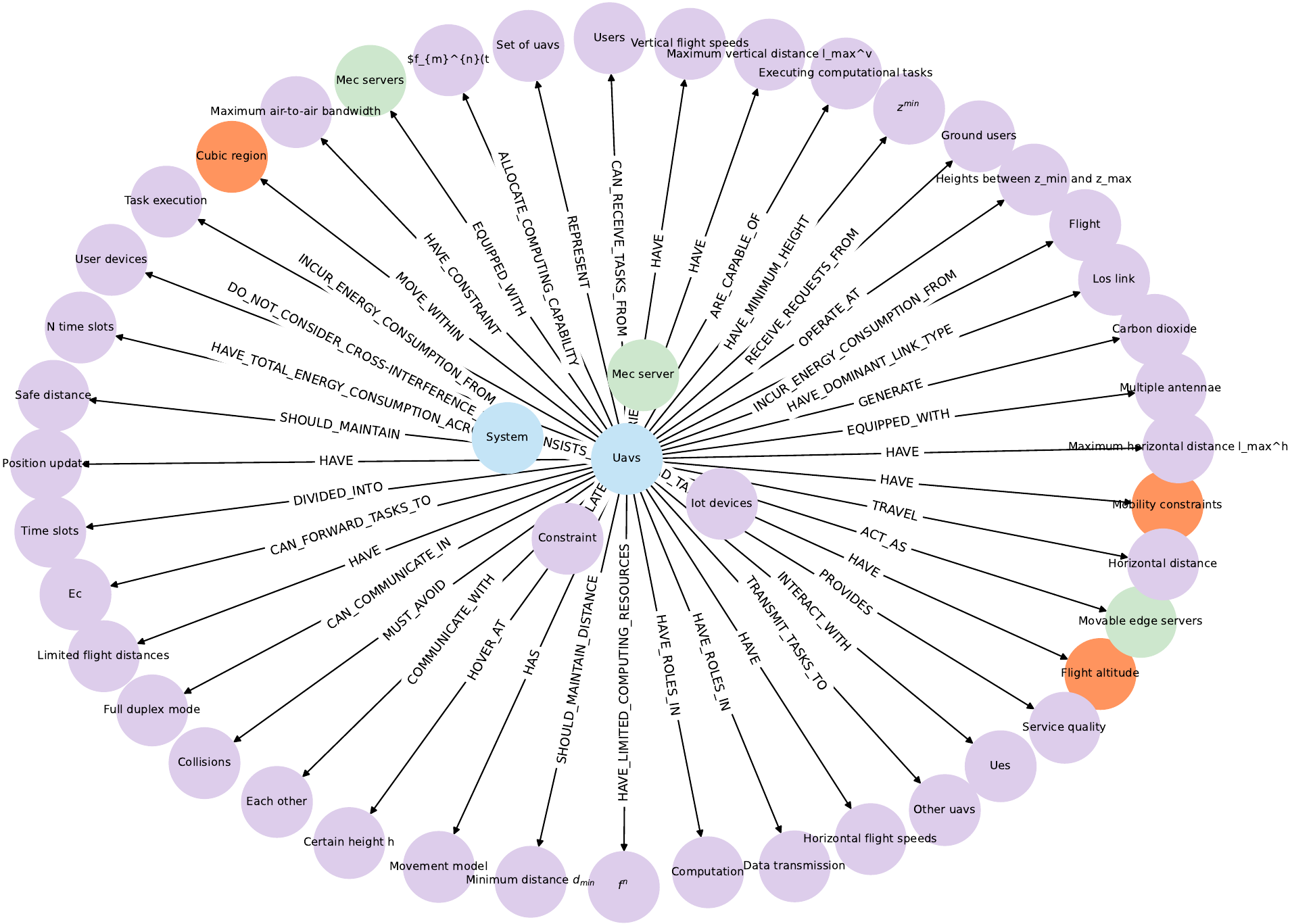}
    \caption{A structured and queryable knowledge graph for the formulation of carbon emission optimization problems. We construct the knowledge graph based on expert data consisting of academic papers from IEEE Xplore, involving carbon emission reduction, resource allocation, and task offloading in multi-UAV-assisted MEC networks.
    }
    \label{Knowledge_graph}
\end{figure}
\subsection{HybridRAG-based LLM Agents}

The HybridRAG comprises three core modules: KeywordRAG, VectorRAG, and GraphRAG~\cite{10.1145/3677052.3698671}. In the following, we introduce the working principle of HybridRAG.

\subsubsection{KeywordRAG}
KeywordRAG, as the key component of the HybridRAG framework, enables LLM agents to retrieve expert knowledge from external documents via keyword matching, instead of relying on semantic search~\cite{ren2024retrieval}. This functionality is especially crucial in wireless communication contexts characterized by dense and nuanced domain-specific terminology. To enhance the performance of KeywordRAG, we align expert-curated keyword segments with domain-specific terms extracted from optimization-related queries, facilitating faster, more precise, and more efficient retrieval of relevant knowledge passages from external databases.

We first perform semantic segmentation on the original external dataset $\mathcal{D}$ and extract hierarchical heading structures from the source corpus. The extracted heading structures function as contextual keywords and serve as hierarchical indices for the segmented text units, collectively forming the retrieval database $\mathcal{D}^{\prime}$~\cite{lee2024hybgraghybridretrievalaugmentedgeneration}. During the interaction between the network designer and the LLM agent regarding the formulation of carbon emission optimization problems, the LLM agent extracts the relevant keyword set $\boldsymbol{W}$ from input queries $\boldsymbol{Q}$ and performs keyword-based matching against the index structure in $\mathcal{D}'$, which is expressed as
\begin{equation} 
    \begin{split} 
        \boldsymbol{W} &= \mathcal{F}_{\rm{keyword}}(\boldsymbol{Q}),\\
        \boldsymbol{W}' &= \{w_i \in \boldsymbol{W} \mid w_i \in \boldsymbol{W}_{\rm{index}}\}, 
    \end{split} 
\end{equation}
where $\mathcal{F}_{\rm{keyword}}(\cdot)$ represents a prompt-based LLM function designed to extract keywords from input queries $\boldsymbol{Q}$, $\boldsymbol{W}_{\rm{index}}$ represents the pre-defined keyword set associated with the indexed corpus, and $\boldsymbol{W}'$ represents the filtered set of valid keywords used for matching.

After keyword filtering, the retrieved documents are ranked based on keyword frequency and relevance~\cite{10.1145/3677052.3698671}. The weight of each document $\boldsymbol{c}_j$ is defined as
\begin{equation} 
    \begin{split} 
        \kappa(\boldsymbol{c}_j) = \sum_{w_i \in \boldsymbol{W}'} \mathbb{I}(w_i \in \mathcal{T}(\boldsymbol{c}_j)), 
    \end{split} 
\end{equation}
where $\mathcal{C} = \{\boldsymbol{c}_1, \ldots, \boldsymbol{c}_j, \ldots, \boldsymbol{c}_J\}$ is the set of all documents in $\mathcal{D}'$, $\mathcal{T}(\boldsymbol{c}_j)$ represents the set of keywords associated with document $\boldsymbol{c}_j$, and $\mathbb{I}(\cdot)$ denotes the indicator function, which yields $1$ if the condition is satisfied and $0$ otherwise. Then, the top $G_{\rm{top}}$ ranked documents are selected and returned to the LLM agent, which is given by
\begin{equation} 
    \mathcal{C}_{\rm{keyword}} = \argmax\sum_{\boldsymbol{c}_{j}\in\mathcal{C}}\kappa(\boldsymbol{c}_{j}),\:|\mathcal{C}_{\rm{keyword}}| = G_{\rm{top}},
\end{equation}
where $\mathcal{C}_{\rm{keyword}}$ denotes the final set of selected documents and $G_{\rm{top}}$ represents the maximum number of text blocks retrieved for downstream processing. The retrieved passages are subsequently concatenated with the initial query and forwarded to the LLM agent, enabling the generation of responses that are contextually aligned with the carbon emission optimization setting in multi-UAV-assisted MEC networks~\cite{gao2023retrieval, lee2024hybgraghybridretrievalaugmentedgeneration}. 


\begin{figure*}[t]
    \centering
    \vspace{-0.5em}
    \includegraphics[width=0.98\textwidth]{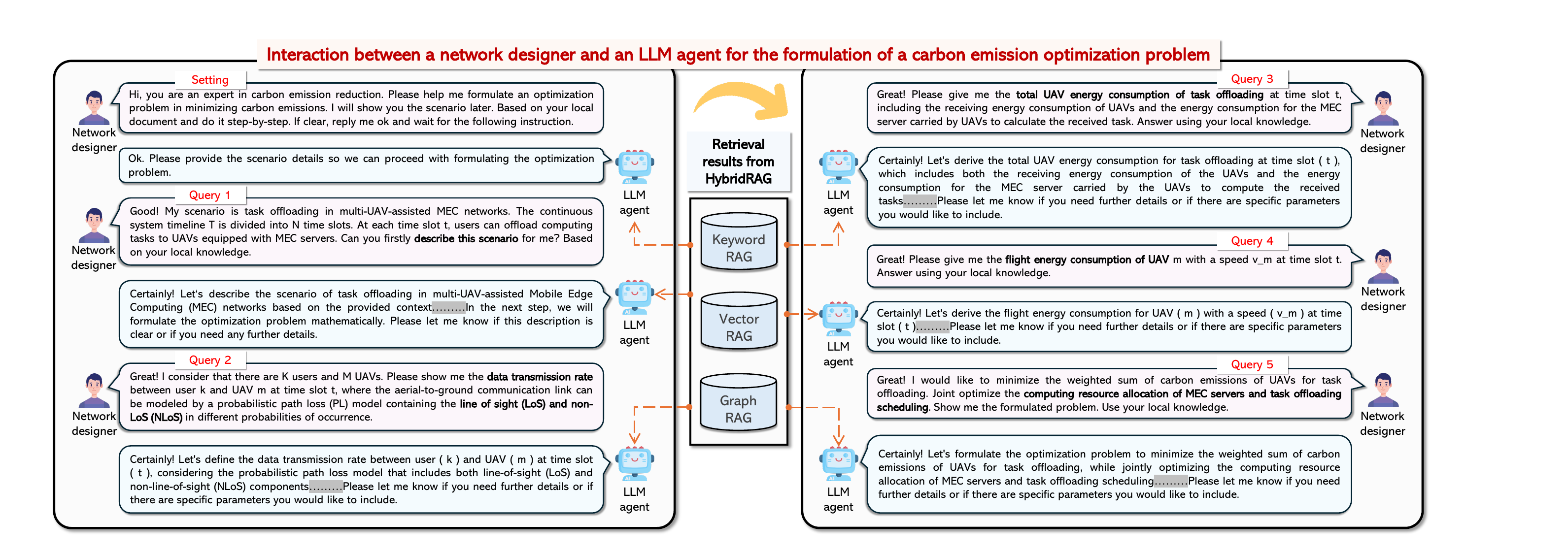}
    \caption{The formulation process of a carbon emission optimization problem for multi-UAV-assisted MEC networks by using the proposed HybridRAG-based LLM agent. The corresponding code and the whole formulation process can be referred to \url{https://github.com/secretcheng/HybridRAG-for-Network-Optimization}.}
    \label{Interaction_user_agent}
\end{figure*}

\subsubsection{GraphRAG}
GraphRAG is an advanced technique that incorporates knowledge graphs into traditional RAG, enabling structured semantic representation and relational reasoning~\cite{xiong2024graphmeetsretrievalaugmented, peng2024graphretrievalaugmentedgenerationsurvey}. Within the context of multi-UAV-assisted MEC networks, GraphRAG can capture the complex relationships among domain-specific entities by leveraging the structured semantics of knowledge graphs~\cite{xiong2024graphmeetsretrievalaugmented}. These entities include ``\textit{UAVs},'' ``\textit{MEC servers},'' ``\textit{User devices},'' ``\textit{Computing tasks},'' and ``\textit{Network resources},'' and their interrelations can be modeled through semantic triplets such as ``\textit{UAVs equipped with MEC servers},'' ``\textit{User devices offload computing tasks to UAVs},'' and ``\textit{MEC servers allocate computing resources}.''

We first utilize LLMs to construct a domain-specific knowledge graph from expert data, as shown in Fig. \ref{Knowledge_graph}. By systematically analyzing the interconnections among entities, GraphRAG enhances logical consistency and precision in the formulation of optimization objectives and constraints. The GraphRAG pipeline comprises three key stages:
\begin{itemize}
    \item \textbf{Knowledge Graph Construction:} Building upon the retrieval database $\mathcal{D}'$, we define a triplet extraction task facilitated by LLMs. Each unstructured document in $\mathcal{D}'$ is processed to extract relevant entities and their relationships~\cite{xiong2024graphmeetsretrievalaugmented, 10.1145/3677052.3698671}, which is expressed as
    \begin{equation}
    \begin{split}
        \boldsymbol{G} &= \mathcal{F}_{\rm{triplet}}(\boldsymbol{T})\\
        &= \{(s_1, p_1, o_1), \ldots, (s_I, p_I, o_I)\},  
    \end{split}
   \end{equation}
    where $\boldsymbol{G}$ represents the set of semantic triplets extracted from a text segment $\boldsymbol{T} \in \mathcal{D}'$ and $\mathcal{F}_{\rm{triplet}}(\cdot)$ represents a prompt-based LLM function designed for triplet extraction. Each triplet $(s_i, p_i, o_i)$ consists of a subject $s_i$, a predicate $p_i$, and an object $o_i$. All extracted triplets are organized based on the subject-object relationships and stored in the Neo4j graph database\footnote{\url{https://github.com/neo4j/neo4j}}, providing a structured and queryable knowledge graph for downstream reasoning and retrieval tasks.

    \item \textbf{Query Processing:} Upon receiving a new input query $\boldsymbol{Q}$, GraphRAG first parses $\boldsymbol{Q}$ to identify relevant entities and their synonyms, as well as potential relationships among them, which is expressed as
    \begin{equation} 
        \boldsymbol{E} = \mathcal{F}_{\rm{entity}}(\boldsymbol{Q}),\:\boldsymbol{E}' = \mathcal{F}_{\rm{synonym}}(\boldsymbol{E}), 
    \end{equation}
    \begin{equation}
        \boldsymbol{E}_{\rm{final}} = \boldsymbol{E} \cup \boldsymbol{E}',
    \end{equation}
    where $\mathcal{F}_{\rm{entity}}(\cdot)$ and $\mathcal{F}_{\rm{synonym}}(\cdot)$ represent prompt-based LLM functions responsible for entity recognition and synonym expansion, respectively. The final entity set $\boldsymbol{E}_{\rm{final}}$ is obtained by taking the union of the recognized entities $\boldsymbol{E}$ and their corresponding synonyms $\boldsymbol{E}'$. Then, the system traverses the constructed knowledge graph to identify paths connecting the entities in $\boldsymbol{E}_{\rm{final}}$, thereby uncovering the semantic and structural relationships in the domain~\cite{xiong2024graphmeetsretrievalaugmented, peng2024graphretrievalaugmentedgenerationsurvey}.
    \item \textbf{Information Retrieval and Generation:} Based on the paths discovered within the knowledge graph, GraphRAG retrieves semantically relevant and structurally coherent information, which is expressed as
    \begin{equation} 
        \mathcal{C}_{\rm{graph}} = \mathrm{str}(\mathcal{H}_{\rm{graph}}(\boldsymbol{E}_{\rm{final}}, d)), 
    \end{equation}
    where $\mathcal{C}_{\rm{graph}}$ represents the set of semantic triplets retrieved from the knowledge graph, $\mathcal{H}_{\rm{graph}}(\cdot)$ is the subgraph retrieval function, $\mathrm{str}(\cdot)$ is the function that translates the retrieved subgraph data into string, and $d$ represents the depth of graph traversal. Due to the structured nature of the knowledge graph, the retrieved information inherently captures the mathematical and logical dependencies among domain-specific variables~\cite{xiong2024graphmeetsretrievalaugmented}. The retrieved subgraph data is then provided as input to the LLM agent, alongside the original query $\boldsymbol{Q}$, to generate a context-sensitive and domain-aligned response. 
\end{itemize}

\subsubsection{HybridRAG}
The proposed HybridRAG technique integrates the advantages of KeywordRAG, VectorRAG, and GraphRAG~\cite{10.1145/3677052.3698671}, enabling effective adaptation to various query types and data sources~\cite{su2024hybrid}, which is particularly suitable for the dynamic and heterogeneous characteristics of multi-UAV-assisted MEC networks.

Building upon the outputs of KeywordRAG ($\mathcal{C}_{\rm{keyword}}$) and GraphRAG ($\mathcal{C}_{\rm{graph}}$), HybridRAG further incorporates the retrieval results from the VectorRAG module ($\mathcal{C}_{\rm{vector}}$)~\cite{lewis2020retrieval, su2024hybrid}, which is the traditional RAG based on vector databases. Finally, these outputs are merged to form the final retrieval result $\mathcal{C}_{\rm{final}}$, which is given by
\begin{equation}
\mathcal{C}_{\rm{final}} = \mathcal{C}_{\rm{keyword}} \cup \mathcal{C}_{\rm{graph}} \cup \mathcal{C}_{\rm{vector}}.
\end{equation}

The aggregated retrieval result $\mathcal{C}_{\rm{final}}$ with the original user query $\boldsymbol{Q}$ is then fed into LLM agents to generate comprehensive responses. The developed HybridRAG can enhance the retrieval quality by effectively integrating relevant information from different retrieval strategies. As a result, LLM agents can progressively generate domain-specific components of the carbon emission optimization problem.

\subsection{Problem Formulation}

After implementing the proposed framework, we leverage the HybridRAG-based LLM agent to formulate the carbon emission optimization problem for multi-UAV-assisted MEC networks through multiple interactions, as illustrated in Fig. \ref{Interaction_user_agent}. The generated optimization problem involves jointly optimizing task offloading scheduling $\boldsymbol{A} = \{\alpha_{k,m}(n),\:\forall k \in \mathcal{K}, m \in \mathcal{M}, n \in \mathcal{N}\}$, computing resource allocation $\boldsymbol{F} = \{f_{k,m}(n),\:\forall k \in \mathcal{K}, m \in \mathcal{M}, n \in \mathcal{N}\}$, and UAV trajectory control $\boldsymbol{V} = \{\mathbf{w}_m(n),\:\forall m \in \mathcal{M}, n \in \mathcal{N}\}$. We then simply organize the generated problem as needed, and this process is non-complicated~\cite{10815045, 10679152}. The final carbon emission optimization problem for multi-UAV-assisted MEC networks can be expressed as
\begin{subequations}\label{problem1}
    \begin{align}
        &\min\limits_{\{\boldsymbol{A}, \boldsymbol{F}, \boldsymbol{V}\}}\:C^{\rm{Total}} \\
        &\:\:\mathrm{s.t.}\:\:\alpha_{k,m}(n) \in \{0,1\},\label{constraint_offloading_1}\\
        &\qquad\: \sum\nolimits_{m=1}^M \alpha_{k,m}(n) = 1,\label{constraint_offloading_2}\\
        &\qquad\: x_m(n) \in [0, X_{\rm{max}}],\: y_m(n)\in [0, Y_{\rm{max}}],\label{constraint_mobility_1}\\
        &\qquad\: \lVert\mathbf{w}_m(n+1) - \mathbf{w}_m(n) \rVert \leq \delta_t V_{\rm{max}},\label{constraint_mobility_2}\\
        &\qquad\: \lVert \mathbf{w}_m(n) - \mathbf{w}_{m^{\prime}}(n) \rVert \geq D_{\rm{min}},\: \forall m, m^{\prime}, m \neq m^{\prime},\label{constraint_mobility_3}\\
        &\qquad\: \alpha_{k,m}(n)\lVert \mathbf{w}_m(n) - \mathbf{v}_k(n)\rVert^2 \leq r_{\rm{max}}^2 + H^2,\label{constraint_mobility_4}\\
        &\qquad\: \sum\nolimits_{k = 1}^K B_{k,m}^{\rm{G2A}} \leq B_{\rm{max}}^{\rm{G2A}},\label{constraint_bandwidth}\\
        &\qquad\: \sum\nolimits_{k=1}^K \alpha_{k,m}(n) f_{k,m}(n) \leq F_{m}^{\mathrm{max}},\label{constraint_computing_1}\\
        &\qquad\: 0 \leq f_{k,m}(n)\leq F_{m}^{\mathrm{max}}.\label{constraint_computing_2}
    \end{align}
\end{subequations}
Constraints (\ref{constraint_offloading_1}) and (\ref{constraint_offloading_2}) represent the task offloading constraints of users. Constraints (\ref{constraint_mobility_1})-(\ref{constraint_mobility_4}) depict the mobility restrictions imposed on UAVs. Constraint (\ref{constraint_bandwidth}) represents the uplink bandwidth constraint for G2A communications. Constraints (\ref{constraint_computing_1}) and (\ref{constraint_computing_2}) represent the computing resource restrictions of UAVs, which cannot exceed the maximum computing capacity of UAVs $F_{m}^{\mathrm{max}}$.

\begin{theorem}\label{theorem_NP_hard}
The optimization problem (\ref{problem1}) is a non-convex and NP-hard function.
\end{theorem}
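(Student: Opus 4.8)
The plan is to establish the two claimed properties separately, since non-convexity and NP-hardness are logically distinct and require different arguments. For \emph{non-convexity}, I would examine the structure of the objective $C^{\rm{Total}} = \varsigma^{\rm{Carbon}} \cdot \tau \cdot E^{\rm{Total}}$ together with the feasible region. First I would observe that the feasible set itself is non-convex: the collision-avoidance constraint \eqref{constraint_mobility_3}, $\lVert \mathbf{w}_m(n) - \mathbf{w}_{m'}(n) \rVert \geq D_{\rm{min}}$, is a reverse-convex (exclusion) constraint whose feasible region is the complement of an open ball, which is manifestly non-convex. In addition, the binary offloading variables $\alpha_{k,m}(n) \in \{0,1\}$ in \eqref{constraint_offloading_1} render the domain discrete and hence non-convex. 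Even restricting to the continuous relaxation, I would show the objective is non-convex by exhibiting the coupling term: the transmission energy $E_{k,m}^{\rm{Tran}}(n) = p_k(n) D_k(n)/R_{k,m}(n)$ depends on $\mathbf{w}_m(n)$ through the path loss $\overline{PL}_{k,m}(n)$ inside the logarithmic rate $R_{k,m}(n)$, and the products $\alpha_{k,m}(n) f_{k,m}(n)$ in \eqref{constraint_computing_1} and $\alpha_{k,m}(n) E_{k,m}^{\rm{Cal}}(n)$ in the objective are bilinear. A clean way to certify non-convexity is to pick one scalar direction (say vary $\mathbf{w}_m(n)$ with everything else fixed) and show the restricted objective has a negative second derivative somewhere, or simply to note that the feasible set being non-convex already precludes the problem from being a convex program.

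For \emph{NP-hardness}, the standard and cleanest route is a reduction from a known NP-hard problem. The plan is to reduce from the \emph{generalized assignment problem} (GAP) or equivalently a bin-packing / multiple-knapsack variant, exploiting the discrete task-offloading structure. The key observation is that constraints \eqref{constraint_offloading_1}, \eqref{constraint_offloading_2}, and \eqref{constraint_computing_1} together form exactly an assignment-with-capacity structure: each user $k$ must be assigned to exactly one UAV $m$ (via $\sum_m \alpha_{k,m}(n) = 1$ with $\alpha_{k,m}(n) \in \{0,1\}$), each assignment consumes a resource $f_{k,m}(n)$, and the total resource on each UAV is capped by $F_m^{\rm{max}}$. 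I would fix the continuous variables (UAV positions $\mathbf{V}$ at arbitrary feasible locations and a single time slot $N=1$) to strip away the trajectory and communication complications, reducing the remaining decision problem to assigning users to capacity-limited servers so as to minimize an additive computation-plus-transmission cost. This restricted instance is precisely GAP, which is NP-hard; since GAP is a special case of problem \eqref{problem1}, the general problem is at least as hard.

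The main obstacle I anticipate is making the reduction fully rigorous in the presence of the coupled continuous variables and the UAV coverage constraint \eqref{constraint_mobility_4}. Specifically, I must argue that the GAP instance I embed is genuinely realizable within the feasible region of \eqref{problem1} — i.e., that I can place the UAVs so every user lies within coverage of every candidate UAV (so that the combinatorial choice is unconstrained by geometry), and that the propulsion energy $E_m^{\rm{Fly}}(n)$, which is independent of $\alpha$ and $f$, contributes only an additive constant that does not affect the arg-min over the assignment. I would handle this by choosing the problem parameters in the reduction (setting $V_{\rm{max}}$, $r_{\rm{max}}$, $D_{\rm{min}}$, and the user/UAV coordinates appropriately) so the geometric constraints are slack and the propulsion term is a fixed offset. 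A subtler point is that the cost coefficients in GAP must match the structure of $E_{k,m}^{\rm{Tran}} + E_{k,m}^{\rm{Cal}}$; I would need to verify that by appropriate choice of $D_k$, $C_k$, $\epsilon_m$, and the bandwidth allocation, the resulting per-assignment costs can realize an arbitrary GAP cost matrix. Once the embedding is verified, non-convexity follows from the reverse-convex and integrality structure, and NP-hardness follows from the polynomial-time reduction, completing the proof of both claims.
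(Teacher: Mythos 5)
Your non-convexity argument is correct and is actually more thorough than the paper's own proof: the paper merely observes that (\ref{problem1}) mixes the binary variables $\boldsymbol{A}$ with the continuous variables $\boldsymbol{F},\boldsymbol{V}$ and that constraint (\ref{constraint_offloading_1}) is non-convex, classifies the problem as a mixed-integer nonlinear program, and asserts NP-hardness from membership in that class. Your additional certificates (the reverse-convex collision constraint (\ref{constraint_mobility_3}) and the bilinear couplings) are valid and strengthen that part. The problem is with your NP-hardness reduction, which is where you attempt to go beyond the paper and where the argument genuinely breaks.

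The gap is this: in GAP, the resource consumption of item $k$ on bin $m$ is a fixed datum of the instance, but in problem (\ref{problem1}) the quantity $f_{k,m}(n)$ playing that role is a \emph{free decision variable}. It enters the model only through the capacity constraint (\ref{constraint_computing_1}), the box constraint (\ref{constraint_computing_2}), and the cost term $E_{k,m}^{\rm{Cal}}(n) = \epsilon_m D_k(n) C_k(n) (f_{k,m}(n))^2$, which is strictly increasing in $f_{k,m}(n)$. Crucially, problem (\ref{problem1}) as formulated contains no task-completion or delay constraint (nothing like $D_k(n)C_k(n)/f_{k,m}(n) \leq \delta_t$) forcing $f_{k,m}(n)$ to be positive. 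Hence in any restricted instance you build, the optimizer simply sets $f_{k,m}(n) = 0$: the computation cost vanishes and (\ref{constraint_computing_1}) is slack for every assignment. What remains is to assign each user independently to its cheapest covered UAV under the fixed transmission costs $E_{k,m}^{\rm{Tran}}(n)$, with no coupling across users --- a problem solvable in polynomial time by taking a per-user minimum. So the family of instances you construct is not GAP; it is trivial, and the ``special case'' argument collapses. (A secondary issue: you cannot literally ``fix'' the decision variables $\boldsymbol{V}$ in a reduction; you must force them through instance parameters, e.g., shrinking the feasible rectangle, which interacts with (\ref{constraint_mobility_3}).) To salvage a genuine reduction you would need either to work with the deadline constraint the system model intends but (\ref{problem1}) omits --- restoring the knapsack structure by forcing $f_{k,m}(n) \geq D_k(n)C_k(n)/\delta_t$ --- or to reduce from a problem that uses constraints actually present, e.g., geometric disk cover via the coverage constraint (\ref{constraint_mobility_4}) together with (\ref{constraint_offloading_2}), where the UAV positions themselves are the hard combinatorial choice.
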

\begin{proof}
    The objective variables involve binary variables (i.e., task offloading scheduling $\boldsymbol{A}$) and continuous variables (i.e., computing resource allocation $\boldsymbol{F}$ and UAV trajectory control $\boldsymbol{V}$). In addition, the constraint (\ref{constraint_offloading_1}) is non-convex. Thus, the optimization problem (\ref{problem1}) is a mixed-integer nonlinear programming NP-hard problem, which is non-convex.
\end{proof}

According to Theorem \ref{theorem_NP_hard}, it is computationally intractable to directly solve the optimization problem (\ref{problem1}) by using heuristic algorithms with finite time. Furthermore, the multi-UAV-assisted MEC network environment is highly dynamic, with variables such as user locations and channel conditions between users and UAVs constantly changing. To address these challenges, we adopt DRL algorithms, which enable model-free policy learning through data sampling to tackle the optimization problem (\ref{problem1}). However, traditional DRL algorithms often struggle with exploration in high-dimensional and complex network environments~\cite{HongyangTMC, WenIoTJ}, leading to convergence to suboptimal policies. Therefore, we introduce diffusion models as DRL policies due to their superior abilities to capture multi-dimensional network features. Building upon this, we propose the R\textsuperscript{2}DSAC algorithm to effectively learn optimal policies to solve the optimization problem (\ref{problem1}).

\begin{figure}
    \centering
    \includegraphics[width=0.4\textwidth]{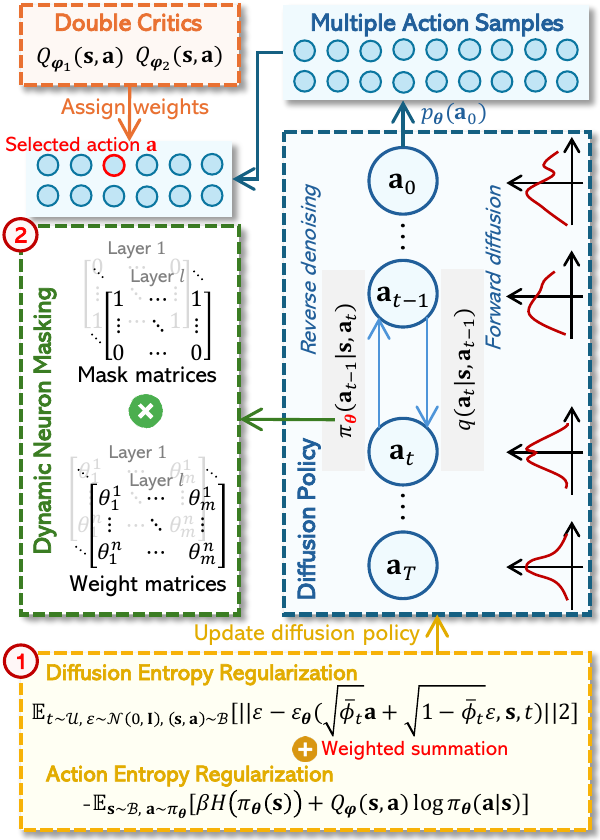}
    \caption{The architecture of the R\textsuperscript{2}DSAC algorithm, with two improvements to traditional diffusion-based DRL algorithms. The first improvement is that we incorporate diffusion entropy regularization and action entropy regularization into the diffusion policy, thereby enhancing policy performance. The second improvement is that we dynamically mask unimportant neurons of diffusion-based actor networks, thereby reducing carbon emissions due to model training.}
    \label{Archiecture_RDMSAC}
\end{figure}

\section{Double Regularization Diffusion-enhanced Soft Actor-Critic Algorithms}\label{RDMSAC}
In this section, we first model the optimization problem (\ref{problem1}) as a Markov Decision Process (MDP). Then, we present the architecture of the proposed R\textsuperscript{2}DSAC algorithm.

\subsection{MDP Modeling}
In the multi-UAV-assisted MEC network, UAVs are generally coordinated by a central UAV manager, which is responsible for both information management and UAV control~\cite{10134570}. Considering that the current actions, namely task offloading scheduling $\boldsymbol{A}$, computing resource allocation $\boldsymbol{F}$, and UAV trajectory control $\boldsymbol{V}$, taken by the UAV manager may affect the following environmental state~\cite{ZhaoTWC}, we formulate the optimization problem (\ref{problem1}) as a MDP $\langle \mathcal{S}, \mathcal{A}, \mathcal{P}, \mathcal{R}, \gamma \rangle$, where $\mathcal{S}$ is the state space, $\mathcal{A}$ is the action space of the DRL agent (i.e., the central UAV manager), $\mathcal{P}$ represents the state transition probability, $\mathcal{R}$ is the reward function of the DRL agent, and $\gamma \in [0,1]$ is the discount factor controlling future returns. The detailed designs are shown as follows:
\subsubsection{State space} At each time slot $n$, the UAV manager can receive task information from users and clearly know the positions of UAVs~\cite{10381761}. Thus, the state $\mathbf{s}(n)$ is composed of task information $\boldsymbol{I}_k(n)$ and UAV positions $\mathbf{w}_m(n) = [x_m(n), y_m(n), H]^T$, which is given by
\begin{equation}
   \mathbf{s}(n) \triangleq \{\mathbf{w}_m(n), \boldsymbol{I}_k(n),\:\forall k \in \mathcal{K}, m \in \mathcal{M}\},
\end{equation}
where the total dimensionality of the state $\mathbf{s}(n)$ is $(2K + 3M)$.

\subsubsection{Action space} The action of the DRL agent $\mathbf{a}(n)$ involves $\boldsymbol{A}$, $\boldsymbol{F}$, and $\boldsymbol{V}$. To reduce the dimensionality of $\mathbf{a}(n)$, we define $i_k(n) \in \{1,\ldots, m, \ldots, M\}$ to represent the task offloading destination of user $k$ at time slot $n$~\cite{10381761}, where $i_k(n) = m$ indicates that user $k$ offloads its task to UAV $m$. Thus, the action $\mathbf{a}(n)$ at time slot $n$ is given by
\begin{equation}
    \mathbf{a}(n) \triangleq \{i_k(n), f_{k,m}(n), v_m(n), \theta_m(n),\:\forall k \in \mathcal{K}, m \in \mathcal{M}\},
\end{equation}
where the total dimensionality of the action $\mathbf{a}(n)$ is $(2M + K(M + 1))$. Since $i_k(n)$ is a discrete variable, we convert it into a continuous representation using a uniform segmentation manner, thus mitigating instability in policy learning caused by the hybrid action space.

\subsubsection{Reward function} The reward function typically accounts for both the objective and the associated constraints of the optimization problem. We denote $C(n)$ as the carbon emissions of the multi-UAV-assisted MEC network at time slot $n$, where $\sum\nolimits_{n=1}^N C(n) = C^{\rm{Total}}$. According to~\cite{ZhaoTWC, 10381761, wang2024multiuavenabledmecnetworks}, the reward function $\mathcal{R}(\mathbf{a}(n)|\mathbf{s}(n))$ can be given by
\begin{equation}\label{reward_function}
    \mathcal{R}(\mathbf{a}(n)|\mathbf{s}(n)) = -C(n) + \underbrace{\Omega_d + \Omega_f + \Omega_g + \Omega_i}_{\rm{Constraint}\:\rm{terms}},
\end{equation}
where $\Omega_d$, $\Omega_f$, $\Omega_g$, and $\Omega_i$ are designed through reward shaping based on human knowledge~\cite{wentao}, thereby guiding the action of the DRL agent $\mathbf{a}(n)$ to satisfy the constraints (\ref{constraint_mobility_1}), (\ref{constraint_mobility_3}), (\ref{constraint_mobility_4}), and (\ref{constraint_computing_1}). Notably, other constraints can be guaranteed through linearly mapping original actions~\cite{10679152}.

\subsection{Algorithm Architecture}
\subsubsection{Diffusion policy}
As illustrated in Fig. \ref{Archiecture_RDMSAC}, at each time slot $n$, conditioned on the observed environment state $\mathbf{s}$, a diffusion policy $\pi_{\boldsymbol{\theta}}(\mathbf{s})$ generates multiple action samples from an action probability distribution $p_{\boldsymbol{\theta}}(\mathbf{a}_0)$ through two sequential Markov processes: forward diffusion and reverse denoising. We consider that the forward diffusion process consists of $T$ steps, denoted as $\mathcal{U} = \{1,\ldots,t,\ldots,T\}$. During the forward diffusion process, the Gaussian noise is gradually added to the target action $\mathbf{a}_0$ across $T$ steps, and we can obtain latent actions $\mathbf{a}_1,\mathbf{a}_2,\ldots,\mathbf{a}_T$. Owing to the Markov chain property~\cite{ding2024diffusionbasedreinforcementlearningqweighted}, the Gaussian noise sample $\mathbf{a}_T \sim \mathcal{N}(0,\mathbf{I})$ can be obtained by cumulatively multiplying the transition from $\mathbf{a}_{t-1}$ to $\mathbf{a}_t$, which is given by~\cite{HongyangTMC, wang2023diffusionpoliciesexpressivepolicy}
\begin{equation}
    q(\mathbf{a}_T|\mathbf{a}_0) = \prod_{t=1}^T\mathcal{N}(\mathbf{a}_t; \sqrt{1 - \psi_t} \mathbf{a}_{t-1}, \psi_t\mathbf{I}).
\end{equation}
Here, $\mathbf{I}$ denotes the identity matrix, and $\psi_t$ represents the noise variance controlled by the variational posterior scheduler at step $t$, which can be calculated  as~\cite{HongyangTMC, ding2024diffusionbasedreinforcementlearningqweighted}
\begin{equation}
\psi_t = 1 - e^{-\frac{\psi_{\mathrm{min}}}{T}-\frac{2t-1}{2T^2}(\psi_{\mathrm{max}} - \psi_{\mathrm{min}})},
\end{equation}
where $\psi_{\mathrm{min}}$ and $\psi_{\mathrm{max}}$ are constant parameters. 

In the reverse denoising process, $\mathbf{a}_0$ is progressively reconstructed from the noise sample $\mathbf{a}_T$ through a step-by-step denoising procedure, and the transition from $\mathbf{a}_t$ to $\mathbf{a}_{t-1}$ follows a Gaussian distribution~\cite{HongyangTMC, ding2024diffusionbasedreinforcementlearningqweighted}, which is given by
\begin{equation}
    p_{\boldsymbol{\theta}}(\mathbf{a}_{t-1}|\mathbf{a}_t) = \mathcal{N}(\mathbf{a}_{t-1}; \mu_{\boldsymbol{\theta}}(\mathbf{a}_t, \mathbf{s}, t), \Sigma_{\boldsymbol{\theta}}(\mathbf{a}_t, t)),
\end{equation}
where
\begin{equation}\label{mean}
    \mu_{\boldsymbol{\theta}}(\mathbf{a}_t, \mathbf{s}, t) = \frac{1}{\sqrt{\phi_t}}\bigg(\mathbf{a}_t - \frac{\psi_t\tanh(\boldsymbol{\varepsilon}_{\boldsymbol{\theta}}(\mathbf{a}_t, \mathbf{s}, t))}{\sqrt{1 - \bar{\phi}_t}}\bigg),
\end{equation}
\begin{equation}\label{covariance}
    \Sigma_{\boldsymbol{\theta}}(\mathbf{a}_t, t) = \frac{\psi_t (1 - \bar{\phi}_{t-1})}{1 - \bar{\phi}_t}\mathbf{I}.
\end{equation}
Here, $\phi_t = 1 - \psi_t$, $\bar{\phi}_{t} = \prod_{i=1}^t \phi_i$, and $\boldsymbol{\varepsilon}_{\boldsymbol{\theta}}(\mathbf{a}_t, \mathbf{s}, t)$ is a deep network parameterized by $\boldsymbol{\theta}$, which generates denoising noises conditioned on the state $\mathbf{s}$ and the current denoising step $t$. Therefore, the generative action distribution $p_{\boldsymbol{\theta}}(\mathbf{a}_0)$ conditioned on $\prod_{t=1}^T(1 - \psi_t) \approx 0$ is given by
\begin{equation}\label{final_action}
    p_{\boldsymbol{\theta}}(\mathbf{a}_0) = \mathcal{N}(\mathbf{a}_T; 0, \mathbf{I})\prod_{t=1}^T p_{\boldsymbol{\theta}}(\mathbf{a}_{t-1}|\mathbf{a}_t).
\end{equation}

Finally, the resulting action $\mathbf{a}_0$ can be sampled from the learned generative distribution $p_{\boldsymbol{\theta}}(\mathbf{a}_0)$, representing the most probable choice among multiple candidate actions~\cite{ding2024diffusionbasedreinforcementlearningqweighted}. We then apply a linear mapping to convert $\mathbf{a}_0$ into $\mathbf{a}$, which can be directly executed in the environment.  

\subsubsection{Q-learning guidance}
To facilitate the diffusion policy $\pi_{\boldsymbol{\theta}}(\mathbf{s})$ to generate actions that contribute to minimizing carbon emissions in the multi-UAV-assisted MEC network, we employ Q-learning guidance into the learning of $\boldsymbol{\varepsilon}_{\boldsymbol{\theta}}(\mathbf{a}_t, \mathbf{s}, t)$ during the reverse denoising process, enabling $\pi_{\boldsymbol{\theta}}(\mathbf{s})$ to sample actions with high $Q$ values. Specifically, we first construct two Q-networks $Q_{\boldsymbol{\varphi}_1}, Q_{\boldsymbol{\varphi}_2}$ and two target networks $Q_{\hat{\boldsymbol{\varphi}}_1}, Q_{\hat{\boldsymbol{\varphi}}_2}$. These critic networks possess the same network architecture. To optimize $\boldsymbol{\varphi}_i$ for $i = {1, 2}$, we minimize the temporal difference error, which is expressed as
\begin{equation}\label{Q_update}
    \begin{split}
    &\mathbb{E}_{(\mathbf{s}(n), \mathbf{a}(n), \mathbf{s}(n+1), \mathcal{R})\sim \mathcal{O}}\Big[\sum_{i=1,2}(\mathcal{R}(\mathbf{a}(n)|\mathbf{s}(n))+\gamma^n(1-d_{n+1})\\
    &(Q_{\hat{\boldsymbol{\varphi}}}(\mathbf{s}(n+1)) - \beta \log \pi_{\hat{\boldsymbol{\theta}}}(\mathbf{s}(n+1)))-Q_{\boldsymbol{\varphi}_i}(\mathbf{s}(n),\mathbf{a}(n)))^2\Big],
\end{split}
\end{equation}
where $\mathcal{O}$ is a mini-batch of transitions sampled from a relay buffer $\mathcal{B}$, $Q_{\hat{\boldsymbol{\varphi}}}(\mathbf{s}) = \min \{Q_{\hat{\boldsymbol{\varphi}}_1}(\mathbf{s}), Q_{\hat{\boldsymbol{\varphi}}_2}(\mathbf{s})\}$~\cite{HongyangTMC}, $d_{n+1} \in \{0,1\}$ is a terminated flag~\cite{HongyangTMC}, with $d_{n+1} = 1$ meaning that the training episode (i.e., $N$ time slots) is ended, $\beta$ is a temperature coefficient that controls the trade-off between the entropy term and the reward~\cite{pmlr-v80-haarnoja18b}, and $\pi_{\hat{\boldsymbol{\theta}}}(\mathbf{s})$ is the target diffusion policy.

\subsubsection{Policy improvement module}
To optimize the diffusion policy $\pi_{\boldsymbol{\theta}}(\mathbf{s})$, instead of directly optimizing the state-value function $V^{\pi}(\mathbf{s})$~\cite{kang2025confidenceregulatedgenerativediffusionmodels}, two issues need to be resolved for the practical implementation of the R\textsuperscript{2}DSAC algorithm in multi-UAV-assisted MEC networks: 
\begin{itemize}
    \item \textit{Negative $Q$ values:} In the reward function (\ref{reward_function}), since $-C(n)$ is inherently negative, it becomes difficult to guarantee that the returned reward always remains non-negative, resulting in negative $Q$ values for certain state-action pairs~\cite{ding2024diffusionbasedreinforcementlearningqweighted}, which may lead to the instability of diffusion policy learning.
    \item \textit{Limited high-quality training samples:} In multi-UAV-assisted MEC networks, it is challenging to obtain expert datasets consisting of state-action samples with high $Q$ values to effectively guide policy learning~\cite{ding2024diffusionbasedreinforcementlearningqweighted}. In the absence of expert behavior guidance, the diffusion policy may be overly confident in specific actions, potentially leading to convergence to a suboptimal solution~\cite{HongyangTMC}.
\end{itemize}

To address the above issues, we incorporate action entropy regularization and diffusion entropy regularization into the policy learning objective function. Specifically, we introduce an action entropy regularization term to encourage the policy to generate a more uniform action distribution, as given by
\begin{equation}\label{action_entropy}
\begin{aligned}
    \mathcal{L}_{\mathrm{act}}(\boldsymbol{\theta}) = - \mathbb{E}_{\mathbf{s}\sim\mathcal{B},\mathbf{a}\sim \pi_{\boldsymbol{\theta}}} [\beta H(\pi_{\boldsymbol{\theta}}(\mathbf{s})) + Q_{\boldsymbol{\varphi}}(\mathbf{s}, \mathbf{a}) \log\pi_{\boldsymbol{\theta}}(\mathbf{a}|\mathbf{s})],
\end{aligned}
\end{equation}
where $Q_{\boldsymbol{\varphi}}(\mathbf{s}, \mathbf{a}) = \min \{Q_{\boldsymbol{\varphi}_1}(\mathbf{s}, \mathbf{a}), Q_{\boldsymbol{\varphi}_2}(\mathbf{s}, \mathbf{a})\}$ and $H(\pi_{\boldsymbol{\theta}}(\mathbf{s}))$ denotes the entropy of the action probability distribution~\cite{HongyangTMC, pmlr-v80-haarnoja18b}. It is worth noting that the action entropy regularization in (\ref{action_entropy}) serves to prevent the policy from prematurely converging to a suboptimal solution~\cite{HongyangTMC}.

The diffusion policy regularization is a sampling-based approach that requires only the random sampling of state-action pairs $(\mathbf{s}(n), \mathbf{a}(n))$ from the relay buffer $\mathcal{B}$ and the current policy~\cite{wang2023diffusionpoliciesexpressivepolicy}. Inspired by the process of denoising diffusion probabilistic models in image generation, we utilize the mean squared error loss to represent the diffusion policy regularization~\cite{wang2023diffusionpoliciesexpressivepolicy}, which is given by
\begin{equation}\label{diffusion_entropy}
\begin{aligned}
    \mathcal{L}_{\mathrm{diff}}(\boldsymbol{\theta}) = \mathbb{E} \Bigg[\bigg|\bigg|\boldsymbol{\varepsilon} - \boldsymbol{\varepsilon}_{\boldsymbol{\theta}}\bigg(\sqrt{\bar{\phi}_{t}}\mathbf{a} + \sqrt{1 - \bar{\phi}_{t}}\boldsymbol{\varepsilon},\mathbf{s}, t\bigg)\bigg|\bigg|^2\Bigg],
\end{aligned}
\end{equation}
where $\boldsymbol{\varepsilon} \sim \mathcal{N}(0, \mathbf{I})$ and $\sqrt{\bar{\phi}_{t}}\mathbf{a} + \sqrt{1 - \bar{\phi}_{t}}\boldsymbol{\varepsilon}$ represents the expert action after the reverse denoising process. Notably, $\mathcal{L}_{\mathrm{diff}}(\boldsymbol{\theta})$ can be seen as a behavior-cloning loss.

Therefore, we formulate the final policy-learning objective as a weighted combination of the diffusion policy regularization and the action policy regularization, which is given by~\cite{wang2023diffusionpoliciesexpressivepolicy}
\begin{equation}
    \pi = \argmin_{\pi_{\boldsymbol{\theta}}}(\mathcal{L}(\boldsymbol{\theta}) = \rho \mathcal{L}_{\mathrm{act}}(\boldsymbol{\theta}) + (1 - \rho) \mathcal{L}_{\mathrm{diff}}(\boldsymbol{\theta})),
\end{equation}
where $\rho \in [0,1]$ represents a behavior-cloning weight.

\subsubsection{Dynamic pruning module}
The actor model $\boldsymbol{\varepsilon}_{\boldsymbol{\theta}}(\mathbf{a}_t, \mathbf{s}, t)$ incorporates temporal information by using sinusoidal position embeddings with multiple Fully Connected Layers (FCLs)~\cite{HongyangTMC}. The encoded time vector is concatenated with the state $\mathbf{s}$ and the noise sample $\mathbf{a}_T$ and passed through a Multi-Layer Perceptron (MLP). Finally, this MLP maps the concatenated input to an output action $\mathbf{a}_{0}$ bounded by a $\tanh$ activation. To reduce carbon emissions associated with model training, we design a dynamic pruning module to dynamically suppress the activity of unimportant neurons in FCLs of $\boldsymbol{\varepsilon}_{\boldsymbol{\theta}}(\mathbf{a}_t, \mathbf{s}, t)$~\cite{10818642}. Specifically, at the beginning of each train episode, the pruning module first evaluates the importance of neurons in each FCL, and then it masks the top $\lfloor |\boldsymbol{\theta}^{(\ell)}| \cdot \varrho \rfloor$ neurons with the lowest importance scores, indicating that their corresponding weight vectors are assigned to $0$, where $|\boldsymbol{\theta}^{(\ell)}|$ denotes the total number of neurons in the FCL $\ell$ and  $\varrho \in [0,1)$ is the pruning rate. This mathematical process can be expressed as
\begin{equation}
    \boldsymbol{\theta}^{(\ell)} \gets \boldsymbol{\theta}^{(\ell)} \odot \boldsymbol{M}^{(\ell)},
\end{equation}
where $\odot$ is the Hadamard product, and $\boldsymbol{M}^{(\ell)}$ represent the mask matrix for the FCL $\ell$, with values of either $0$ or $1$. Similarly, the unimportant parameters of the target actor network are also masked by the dynamic pruning module. 

\subsubsection{Parameter updates} At the end of the training episode, we use the Adam optimizer to update the policy parameters $\boldsymbol{\theta}$, which is given by
\begin{equation}\label{actor_update}
    \boldsymbol{\theta}_{e+1} \gets \boldsymbol{\theta}_{e} - \sigma \nabla_{\boldsymbol{\theta}}\mathcal{L}(\boldsymbol{\theta}),
\end{equation}
where $\boldsymbol{\theta}_{e}$ are the policy parameters in the $e$-th training episode and $\sigma \in (0,1]$ is the learning rate of the policy. In addition, we perform a soft update to the parameters of the target actor and critic networks~\cite{wang2023diffusionpoliciesexpressivepolicy}, respectively, as given by
\begin{equation}\label{target_update}
\begin{split}
    \hat{\boldsymbol{\theta}}_{e+1} &\gets \xi \boldsymbol{\theta}_e + (1 - \xi)\hat{\boldsymbol{\theta}}_{e},\\
    \hat{\boldsymbol{\varphi}}_{e+1} &\gets \xi \boldsymbol{\varphi}_{e} + (1-\xi)\hat{\boldsymbol{\varphi}}_{e},
\end{split}  
\end{equation}
where $\boldsymbol{\varphi}_{e}$ are the Q-function parameters $\boldsymbol{\varphi}_{i, e},\:i\in \{1,2\}$ in the $e$-th training episode and $\xi \in (0,1]$ is the update rate of target networks.

\begin{algorithm}[t]
\label{diffusion_algorithm}
\DontPrintSemicolon
\SetAlgoLined

Initialize parameters $\boldsymbol{\theta}, \boldsymbol{\varphi}, \hat{\boldsymbol{\theta}}, \hat{\boldsymbol{\varphi}}$ and relay buffer $\mathcal{B}$.

Initialize mask matrices $\boldsymbol{M}$.

Initialize hyperparameters and pruning rate $\varrho$.

\For{\rm{the episode} $e=1$ \rm{to} $E_{\mathrm{max}}$}
{
\For{$n=1$ \rm{to} $N$}
{
    Observe state $\mathbf{s}(n)$ and randomly initialize a normal sample $\mathbf{a}_T \sim \mathcal{N}(0, \mathbf{I})$.
    
    \For{$t=1$ \rm{to} $T$}
    {
    \textcolor{blue}{\textit{\#\#\# Reverse denoising \#\#\#}}
    
    Construct a denoising network $\boldsymbol{\varepsilon}_{\boldsymbol{\theta}}(\mathbf{a}_t, \mathbf{s}, t)$.

    Calculate the mean and covariance by using (\ref{mean}) and (\ref{covariance}), respectively.
    
    Obtain the action $\boldsymbol{a}_0$ based on (\ref{final_action}). 
    }

    \textcolor{blue}{\textit{\#\#\# Experience collections \#\#\#}}
    
    Linearly map $\boldsymbol{a}_0$ to $\boldsymbol{a}(n)$ and perform $\boldsymbol{a}(n)$.
    
    Observe the next state $\mathbf{s}(n+1)$ and obtain the corresponding reward $\mathcal{R}(\mathbf{a}(n)|\mathbf{s}(n))$. 

    Store record $(\mathbf{s}(n),\mathbf{a}(n),\mathbf{s}(n+1), \mathcal{R})$ into $\mathcal{B}$.

}

    \textcolor{blue}{\textit{\#\#\# Dynamic pruning \#\#\#}}
    
    Evaluate the importance of neurons in $\boldsymbol{\theta}$ and $\hat{\boldsymbol{\theta}}$.

    Dynamically mask the unimportant neurons of $\boldsymbol{\theta}$ and $\hat{\boldsymbol{\theta}}$ according to the pruning rate $\varrho$.

    \textcolor{blue}{\textit{\#\#\# Parameter updates \#\#\#}}
    
    Sample a random mini-batch of transitions $\mathcal{O}$ with size $O$ from $\mathcal{B}$.
    
    Update $Q_{\boldsymbol{\varphi}_1}, Q_{\boldsymbol{\varphi}_2}$ using $\mathcal{B}$ to minimize (\ref{Q_update}).
    
    Update the policy parameters $\boldsymbol{\theta}$ using $\mathcal{B}$ by (\ref{actor_update}).

    Update target network parameters $\hat{\boldsymbol{\theta}}, \hat{\boldsymbol{\varphi}}$ by (\ref{target_update}).
}
\textbf{return} the policy networks.

\caption{R\textsuperscript{2}DSAC}\label{diffusion_algorithm}
\end{algorithm}

\subsection{Complexity Analysis}
Algorithm \ref{diffusion_algorithm} presents the comprehensive process for implementing the R\textsuperscript{2}DSAC algorithm. In the following, we analyze its computational complexity. 
\subsubsection{Algorithm initialization} 
The computational overhead of algorithm initialization mainly comes from the initialization of network parameters and mask matrices, and the computational complexity of this part is $\mathcal{O}(4|\boldsymbol{\theta}| + 2|\boldsymbol{\varphi}|)$.

\subsubsection{Action sampling}
The computational complexity of action sampling arises from the reverse diffusion process, which is given by $\mathcal{O}(E_{\mathrm{max}}NT|\boldsymbol{\theta}|)$~\cite{HongyangTMC}.

\subsubsection{Experience collections}
We define the complexity of the DRL agent interacting with the environment as $V$. The computational complexity of experience collections is $\mathcal{O}(E_{\mathrm{max}}NV)$~\cite{HongyangTMC}. 

\subsubsection{Dynamic pruning}
In the dynamic pruning module, the computational overhead comes from the evaluation of neuron importance and the Hadamard product. Hence, the computational complexity of dynamic pruning is $\mathcal{O}(2E_{\mathrm{max}}|\boldsymbol{\theta}|)$~\cite{10818642}.

\subsubsection{Parameter updates}
The computational complexity of parameter updates consists of three parts: $\mathcal{O}(OE_{\mathrm{max}}|\boldsymbol{\theta}|)$ for policy improvement, $\mathcal{O}(OE_{\mathrm{max}}|\boldsymbol{\varphi}|)$ for critic network improvement, and $\mathcal{O}(E_{\mathrm{max}}(|\boldsymbol{\theta}| + |\boldsymbol{\varphi}|))$ for target network improvement. Thus, the computational complexity of parameter updates is $\mathcal{O}(E_{\mathrm{max}}(O+1)(|\boldsymbol{\theta}| + |\boldsymbol{\varphi}|))$~\cite{HongyangTMC}.

Based on the above analysis, the computational complexity of the R\textsuperscript{2}DSAC algorithm is $\mathcal{O}(4|\boldsymbol{\theta}| + 2|\boldsymbol{\varphi}| + E_{\mathrm{max}}N(T|\boldsymbol{\theta}| + V) + 2E_{\mathrm{max}}|\boldsymbol{\theta}| +  E_{\mathrm{max}}(O+1)(|\boldsymbol{\theta}| + |\boldsymbol{\varphi}|))$.

\begin{table}[t]\label{parameter}
	\renewcommand{\arraystretch}{1.2}
	\caption{Simulation Parameters}
    \centering
	\begin{tabular}{m{5.3cm}<{\raggedright}|m{2.5cm}<{\centering}}
    \toprule
		\hline		
		\textbf{Parameters} & \textbf{Values}\\	
		\hline
        Data sizes of computing tasks ($D_k(n)$)~\cite{9817088} &  $[100, 300]\: \rm{MB}$\\	
        \hline
        Number of CPU cycles required for computing one bit of task data ($C_k(n)$)~\cite{ZhaoTWC} &  $[100, 200]\: \rm{cycles/bit}$\\	
		\hline
        Fixed flight altitude of UAVs ($H$)~\cite{9930881} &  $100\:\rm{m}$\\	
		\hline	
		Maximum speed of UAVs ($V_{\rm{max}}$)~\cite{SunTMC}  &  $60\:\rm{m/s}$  \\	
		\hline
		Safe distance between UAVs ($D_{\rm{min}}$)~\cite{9930881}   & $10\:\rm{m}$ \\
		\hline	
		  Coverage area radius of UAVs ($r_{\rm{max}}$)~\cite{9930881} &  $100\:\rm{m}$\\
		\hline
        Constant parameters ($a$, $b$)~\cite{wang2024multiuavenabledmecnetworks} &  $9.61$, $0.16$\\
		\hline
        Carrier frequency of UAVs ($f$)~\cite{9817088} &  $2\:\rm{GHz}$\\ \hline
        Excessive path loss for LoS and NLoS links ($\eta^{\rm{LoS}}$, $\eta^{\rm{NLoS}}$)~\cite{10381761}&  $1$, $20$\\	
		\hline
		  Maximum uplink bandwidth ($B_{\rm{max}}^{\rm{G2A}}$)~\cite{SunTMC}  &  $20\:\rm{MHz}$  \\	
		\hline
		Transmit power of user $k$ ($p_k(n)$)~\cite{9817088}   & $23\:\rm{dBm}$  \\
		\hline
		Additive Gaussian white noise power ($\delta^2$)~\cite{SunTMC} &  $-100\:\rm{dBm}$\\	
		\hline		
		  Effective switching capacitance ($\epsilon_m$)~\cite{SunTMC} &  $10^{-27}$\\
		\hline
        Computing capacity of UAV $m$ ($F_{m}^{\mathrm{max}}$)~\cite{9930881} &  $5\:\rm{GHz}$\\
		\hline
        Blade profile power ($P_0$)~\cite{9930881} &  $79.8563\:\rm{Watt}$\\ \hline
        Induced power for hovering ($P_1$)~\cite{9930881} &  $88.6279\:\rm{Watt}$\\ \hline
        Blade tip speed ($U^{\rm{Tip}}$)~\cite{9930881} &  $120\:\rm{m/s}$\\ \hline
        UAV fuselage drag ratio ($d_0$)~\cite{9930881} &  $0.6$\\ \hline
        Air density ($\zeta$)~\cite{9930881} &  $1.225\:\rm{kg/m^3}$\\ \hline
        Rotor solidity ($s$)~\cite{9930881} &  $0.05$\\ \hline
        Rotor disk area ($A$)~\cite{9930881} &  $0.503\:\rm{m^2}$\\ \hline
        Average rotor induced speed ($v_0$)~\cite{9930881} &  $4.03\:\rm{m/s}$\\ \hline
        Conversion coefficient between watt-hours and joules ($\tau$)~\cite{10677514} &  $1/3600$\\ \hline
    \bottomrule
	\end{tabular}\label{parameter}
\end{table}

\begin{figure}
    \centering
    \includegraphics[width=0.40\textwidth]{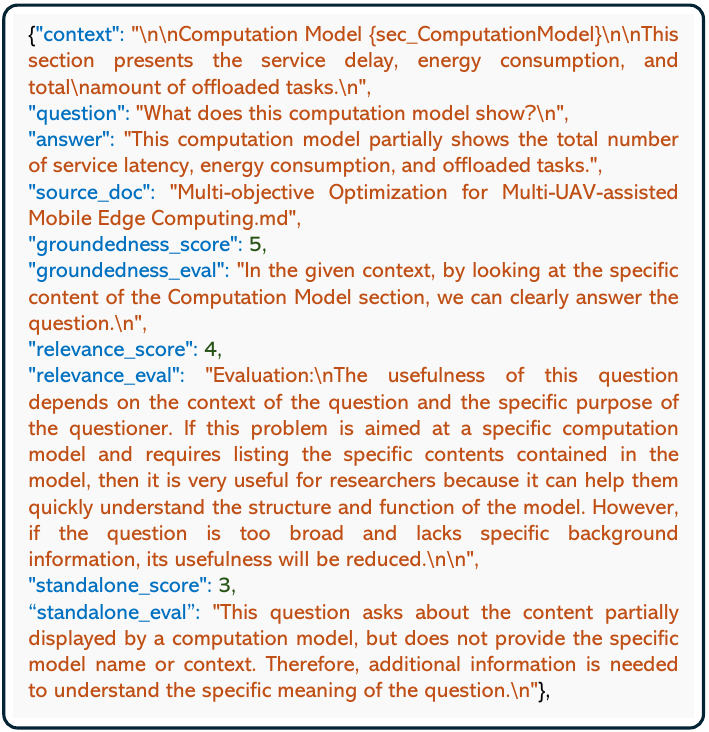}
    \caption{QA pairs outputted by the HybridRAG-based LLM agent, which can serve as the test dataset to evaluate the performance of HybridRAG.}
    \label{Critique_qa}
\end{figure}

\begin{table}[htbp]
  \renewcommand{\arraystretch}{1.2}
  \caption{Performance Evaluation Results of HybridRAG}
  \centering
  \begin{tabular}{ccc}
    \toprule
    \multirow{2}{*}{\textbf{Metrics}} & \multicolumn{2}{c}{\textbf{RAG Systems}} \\
    \cmidrule(lr){2-3}
    & \textbf{VectorRAG + KeywordRAG} & \textbf{HybridRAG} \\
    \midrule
    Prec.$\uparrow$ & $44.6$ & \cellcolor{gray!10}$\textbf{46.2}$ \\
    Rec.$\uparrow$ & $74.4$ & \cellcolor{gray!10}$\textbf{76.5}$ \\
    F1$\uparrow$ & $49.9$ & \cellcolor{gray!10}$\textbf{53.2}$ \\ 
    \midrule[0.1pt]
    \addlinespace[0.5em] 
    CR$\uparrow$ & $82.7$ & \cellcolor{gray!10}$\textbf{83.1}$ \\
    CP$\uparrow$ & $32.4$ & \cellcolor{gray!10}$\textbf{33.4}$ \\ 
    \midrule[0.1pt]
    \addlinespace[0.5em] 
    CU$\uparrow$ & $75.8$ & \cellcolor{gray!10}$\textbf{80.2}$ \\
    NS(I)$\downarrow$ & $30.5$ & \cellcolor{gray!10}$\textbf{28.6}$ \\
    NS(II)$\downarrow$ & $18.1$ & \cellcolor{gray!10}$\textbf{18}$ \\ 
    Hallu.$\downarrow$ & $\textbf{6.7}$ & \cellcolor{gray!10}$7.2$ \\
    SK$\downarrow$ & $0.4$ & \cellcolor{gray!10}$0.4$ \\
    Faith.$\uparrow$ & $\textbf{92.8}$ & \cellcolor{gray!10}$92.4$ \\
    \bottomrule
  \end{tabular}
  \label{RAGChecker}
\end{table}

\section{Simulation Results}\label{Simulation_result}
In this section, we first introduce the experimental setup. We then employ the RAGChecker framework to evaluate the performance of the developed HybridRAG. Finally, we validate the effectiveness of the proposed R\textsuperscript{2}DSAC algorithm.

\subsection{Experimental Setup}
We consider a multi-UAV-assisted MEC network where $2$ UAVs possess offloaded tasks and provide services to $10$ users in a $1000 \times 1000 \: \rm{m}^2$ rectangular area, i.e., $X_{\rm{max}} = Y_{\rm{max}} = 1000$. The service duration is configured as $N = 100$ time slots,  with $\delta_t = 1\:\rm{s}$~\cite{9930881}. Each UAV is equipped with an MEC server and is capable of serving up to $5$ users at each time slot~\cite{10636964}. Without loss of generality, the initial positions of UAVs are set to $(400, 400, 100)$ and $(600, 600, 100)$, respectively. Table \ref{parameter} shows the simulation parameters, and the experiments for the performance evaluation of the R\textsuperscript{2}DSAC algorithm are conducted on an NVIDIA GeForce RTX 3080 Laptop GPU by using PyTorch with CUDA 12.0.

For the construction of HybridRAG, we call the Qwen2.5-72B model through API as the pluggable LLM module, with the temperature set to $0.85$ and the context window configured to $8192$ tokens. Moreover, we utilize the BGE-M3 Embedding model\footnote{\url{https://github.com/FlagOpen/FlagEmbedding}} to transform textual data into high-dimensional vector representations, and the chunk size is set to $1024$, with a default chunk overlap of $20$ tokens. Furthermore, we employ Qdrant to store the high-dimensional vectors, utilize Neo4j to manage the knowledge graph, and adopt the Simple Keyword Table Index for keyword extraction from the text. The experiments for the performance evaluation of HybridRAG are conducted on an Intel Xeon(R) Gold 6133 CPU and two NVIDIA RTX A6000 GPUs.

\begin{figure*}[t]
\centering
\subfigure[Comparisons of test rewards for different algorithms.]
{
    \begin{minipage}[t]{0.47\linewidth}
	\centering
	\includegraphics[width=1\linewidth]{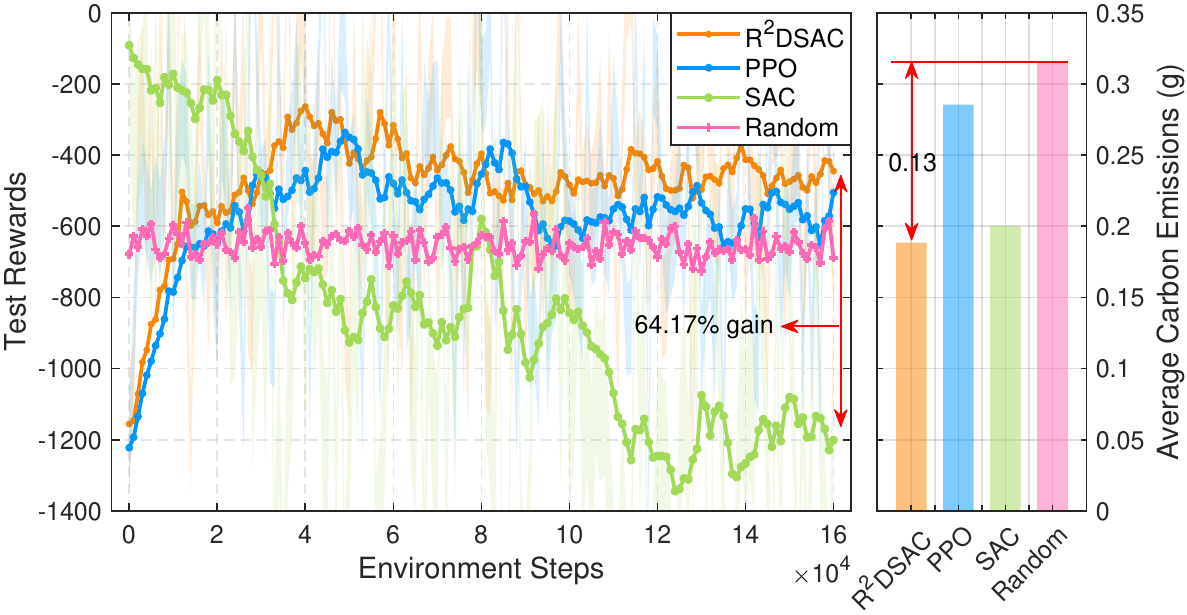}\label{Algorithm_compare}
        \captionsetup{font=footnotesize}
    \end{minipage}
}
\hspace{0.05in}
\subfigure[Ablation experiments.]
{
    \begin{minipage}[t]{0.47\linewidth}
	\centering
	\includegraphics[width=1\linewidth]{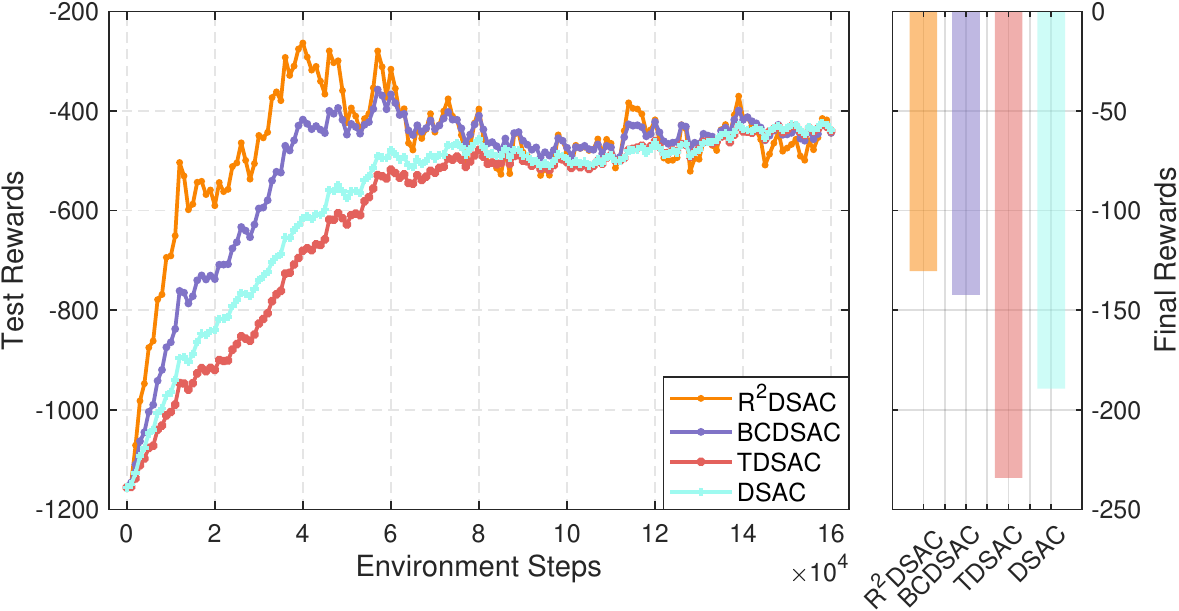}\label{ablation}
        \captionsetup{font=footnotesize}
    \end{minipage}
}
\caption{Performance evaluation of the R\textsuperscript{2}DSAC algorithm in carbon emission optimization.}\label{Performance_compare}
\end{figure*}

\begin{figure*}[t]
		\centering
		\begin{subfigure}[Pruning rate impact.]{\includegraphics[width=0.3\linewidth]{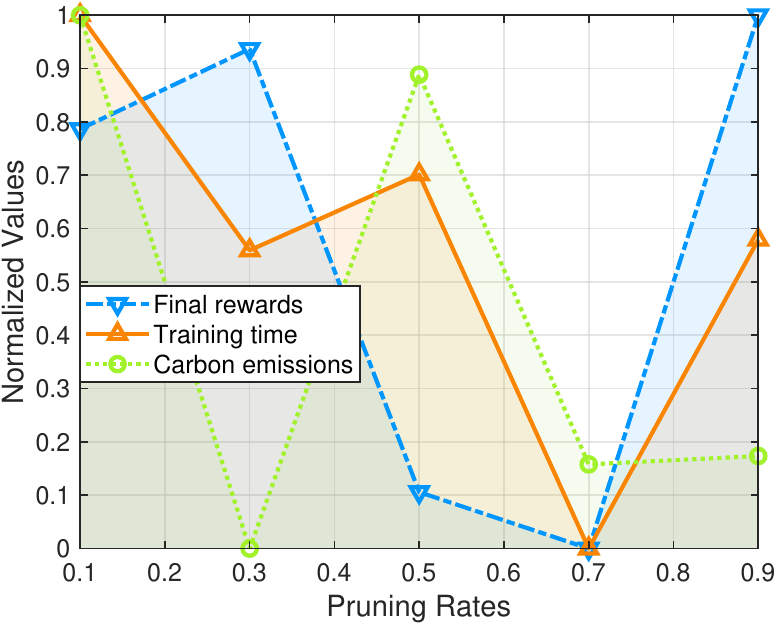}\label{Pruning_normalization}}
        \captionsetup{font=footnotesize}
		\end{subfigure}
		\hfill
		\begin{subfigure}[Diffusion step impact.]{\includegraphics[width=0.3\linewidth]{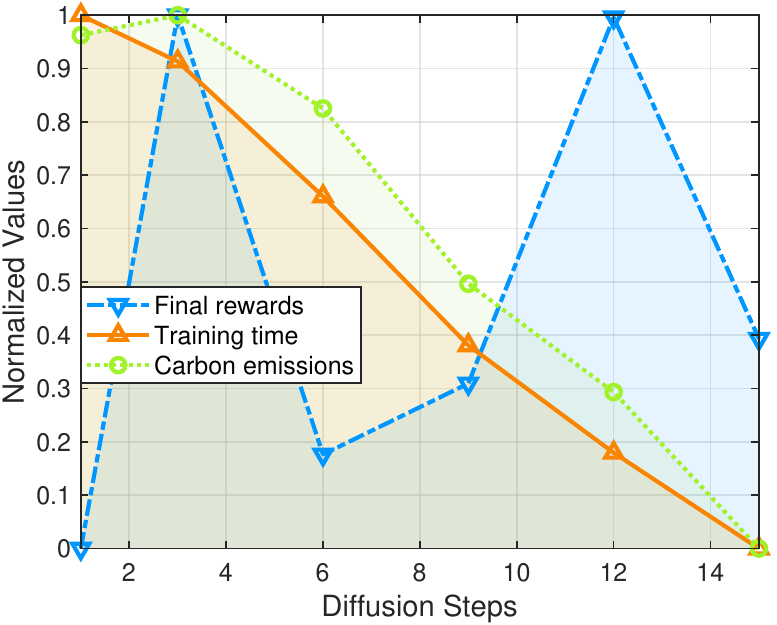}\label{Diffusion_normalization}}
        \captionsetup{font=footnotesize}
		\end{subfigure}
		\hfill
		\begin{subfigure}[Behavior-cloning weight impact.]{\includegraphics[width=0.3\linewidth]{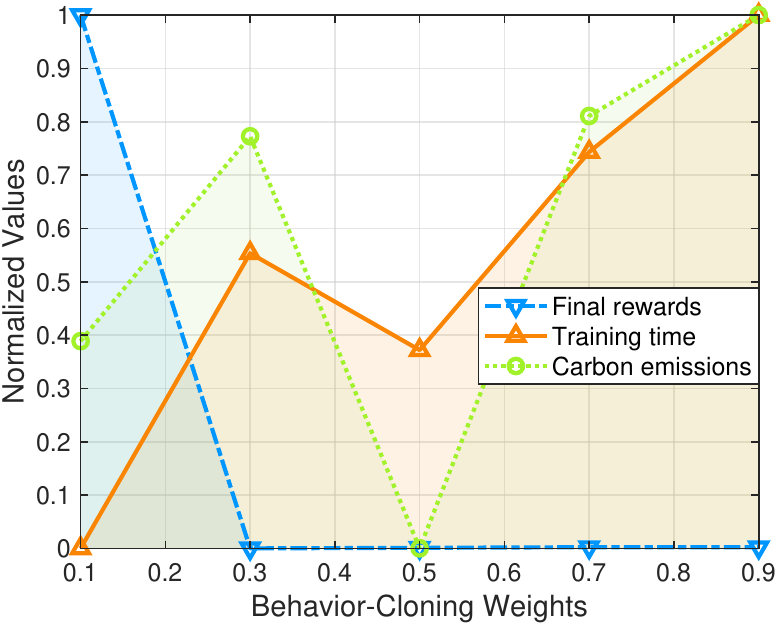}\label{Behavior_normalization}}
        \captionsetup{font=footnotesize}
		\end{subfigure}
		\caption{Impacts of pre-defined parameters on final rewards, training time, and carbon emissions during model training.}
		\label{Normalization_analysis}
\end{figure*}

\subsection{Performance Evaluation of HybridRAG}
We apply RAGChecker to evaluate the performance of HybridRAG~\cite{NEURIPS2024_27245589}. Prior to evaluation, we construct a baseline test dataset consisting of Question-Answer (QA) pairs generated by the HybridRAG-based LLM agent, as shown in Fig. \ref{Critique_qa}. The RAGChecker adopts a claim-level checking method to enable fine-grained performance evaluation. It leverages LLMs to perform two roles: extracting claims from the test dataset as text-to-claim extractors and verifying their accuracy as claim-entailment checkers~\cite{NEURIPS2024_27245589}. In our implementation, we utilize Qwen2.5-14B models as extractors and checkers. To evaluate the performance of HybridRAG, we adopt three categories of evaluation metrics: overall metrics, generator metrics, and retriever metrics~\cite{NEURIPS2024_27245589}, as illustrated in Table \ref{RAGChecker}.

\subsubsection{Overall metrics} To assess the overall response quality of HybridRAG, we compute claim-level Precision (Prec.) and Recall (Rec.), where Prec. measures the proportion of correct claims among all response claims, while Rec. quantifies the proportion of correct claims in all ground-truth answer claims. Moreover, we compute the F1 score as the overall performance metric by calculating the harmonic mean of Prec. and Rec. We observe that HybridRAG can achieve a higher F1 score compared with the combined performance of VectorRAG and KeywordRAG, indicating superior overall performance.

\subsubsection{Retriever metrics} 
To evaluate the retrieval ability of HybridRAG, we compute Claim Recall (CR) and Context Precision (CP), where CR measures the proportion of claims involved in ground-truth answers among retrieved chunks, while CP quantifies the proportion of relevant chunks in the retrieval context. We observe that HybridRAG outperforms the combination of VectorRAG and KeywordRAG in terms of both CR and CP metrics, indicating superior retrieval performance.

\subsubsection{Generator metrics}
Generator metrics are utilized to evaluate the response quality and generation performance of HybridRAG-based LLM agents, and we adopt six generator metrics covering various aspects of response generation: Context Utilization (CU) reflecting the extent of effectively utilizing relevant information in the context, Relevant Noise Sensitivity (NS(I)) representing the proportion of incorrect claims entailed in relevant chunks, Irrelevant Noise Sensitivity (NS(II)) representing the proportion of incorrect claims entailed in irrelevant chunks, Hallucination (Hallu.) representing the proportion of incorrect claims not entailed in any retrieved chunk, Self-Knowledge (SK) representing the proportion of correct claims generated by the LLM agent, and Faithfulness (Faith.) describing the extent of using the retrieval context by the LLM agent. We observe that, in addition to Faith. and Hallu. metrics, HybridRAG possesses better performance in other generator metrics.

Overall, HybridRAG outperforms traditional RAG in the context of optimization problem formulation for multi-UAV-assisted MEC networks.

\subsection{Performance Evaluation of the Proposed Algorithm}
In Fig. \ref{Performance_compare}, we present the performance evaluation of the proposed R\textsuperscript{2}DSAC algorithm for carbon emission optimization in multi-UAV-assisted MEC networks. Specifically,  Fig. \ref{Algorithm_compare} shows the performance comparison of the R\textsuperscript{2}DSAC algorithm with the random algorithm and two conventional DRL algorithms, i.e., SAC and Proximal Policy Optimization (PPO). We observe that the R\textsuperscript{2}DSAC algorithm achieves the highest test rewards among them and converges more rapidly than both PPO and SAC algorithms. Furthermore, it exhibits a $64\%$ performance improvement over SAC. Moreover, the R\textsuperscript{2}DSAC algorithm achieves the lowest carbon emissions for task offloading in multi-UAV-assisted MEC networks. The reason is that the diffusion process of the R\textsuperscript{2}DSAC algorithm helps mitigate the effects of noise and randomness during the generation of optimal strategies. Notably, to demonstrate that the algorithm itself does not introduce significant additional carbon emissions during strategy generation for multi-UAV task offloading, we employ CodeCarbon\footnote{\url{https://github.com/mlco2/codecarbon}} and estimate that the carbon emissions generated during model training are approximately $70.3\:\rm{g}$. After completing model training, the R\textsuperscript{2}DSAC algorithm can be directly used to generate optimal strategies, with an estimated carbon emission of only $0.025\:\rm{g}$ per inference. In Fig. \ref{ablation}, we conduct ablation experiments to evaluate the effectiveness of algorithm modules. Specifically, we compare the R\textsuperscript{2}DSAC algorithm with three baseline variants: 1) Behavior-Cloning Diffusion-SAC (BCDSAC) algorithm without the dynamic pruning module; 2) Tiny Diffusion-SAC algorithm (TDSAC) without the policy improvement module; 3) Diffusion-SAC (DSAC) algorithm without both modules. We observe that the R\textsuperscript{2}DSAC algorithm outperforms the three baseline variants, achieving the highest final reward. This improvement can be attributed to two key factors: diffusion entropy regularization, which provides stable imitation learning signals to help prevent policy collapse; and action entropy regularization, which encourages the DRL agent to explore a wider range of strategies, thereby avoiding convergence to a local optimum. Overall, the above analysis demonstrates the effectiveness of the proposed algorithm.

As illustrated in Fig. \ref{Normalization_analysis}, we evaluate the impacts of pruning rate $\varrho$, diffusion step $T$, and behavior-cloning weight $\rho$ on final rewards, training time, and carbon emissions associated with model training, normalized to $[0,1]$. In each evaluation, we only adjust one parameter while fixing the others. From Fig. \ref{Pruning_normalization}, we observe that the comprehensive performance of the R\textsuperscript{2}DSAC algorithm is optimal when $\varrho = 0.1$, achieving the lowest training time and carbon emissions while maintaining strong model performance. From Fig. \ref{Diffusion_normalization}, we observe that the normalized values of both training time and carbon emissions decrease as the number of diffusion steps increases, indicating that the length of the diffusion chain significantly influences the computational overhead of model training. Moreover, the algorithm achieves its best performance when $T = 3$, yielding the highest final reward. From Fig. \ref{Behavior_normalization}, we observe that as the behavior-cloning weight increases, the DRL agent is more likely to fall into suboptimal solution spaces. This occurs because excessive reliance on behavioral cloning can lead to insufficient exploration, limiting the DRL agent from discovering optimal strategies.

\begin{figure*}[t]
		\centering
		\begin{subfigure}[Environmental state 1.]{\includegraphics[width=0.3\linewidth]{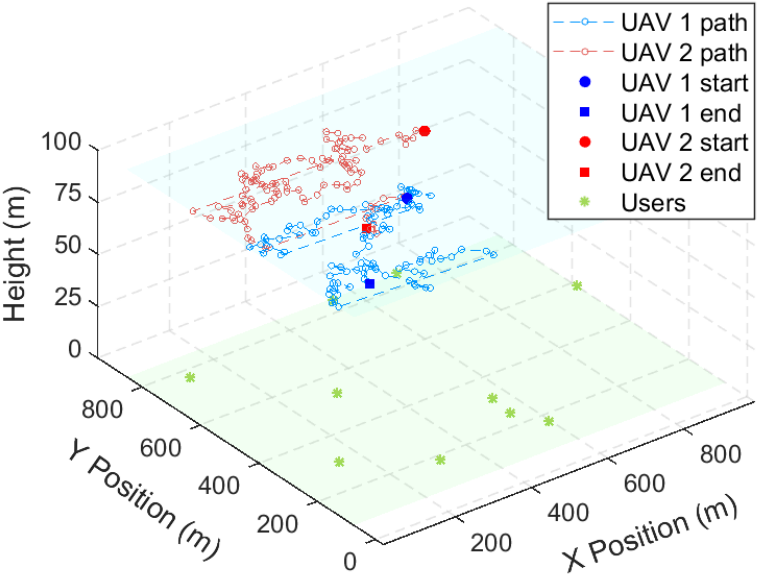}\label{uav1}}
		\end{subfigure}
		\hfill
		\begin{subfigure}[Environmental state 2.]{\includegraphics[width=0.3\linewidth]{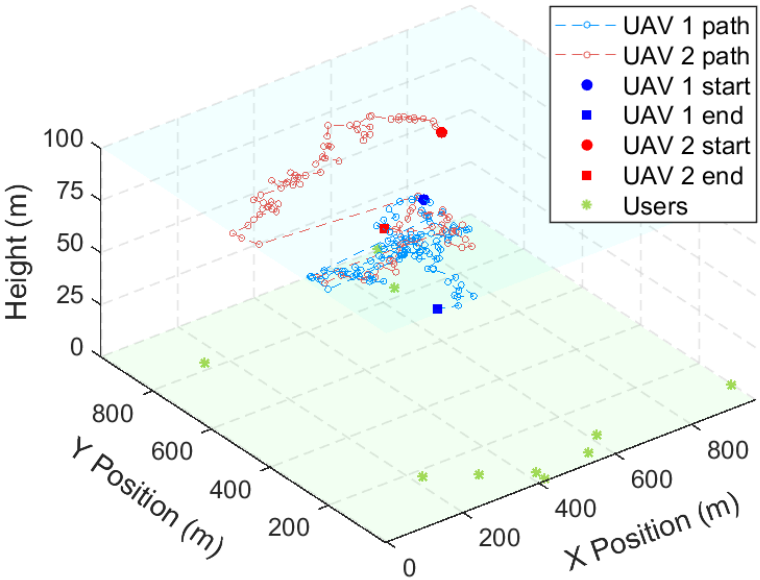}\label{uav2}}
		\end{subfigure}
		\hfill
		\begin{subfigure}[Environmental state 3.]{\includegraphics[width=0.3\linewidth]{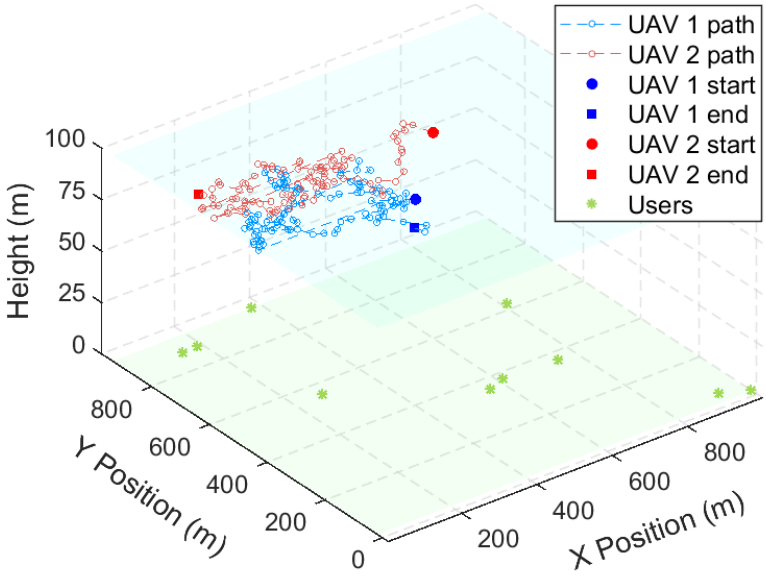}\label{uav3}}
		\end{subfigure}
		\caption{UAV trajectories generated by the R\textsuperscript{2}DSAC algorithm within a single episode under varying environmental states.}
		\label{UAV_path}
\end{figure*}

In Fig. \ref{UAV_path}, we show two UAV trajectories generated by the R\textsuperscript{2}DSAC algorithm within a single episode under varying environmental states. Each environmental state represents a specific user position set. Initially, the two UAVs depart from their respective fixed starting positions. At each time slot, the UAV manager determines the flying directions and velocities of the UAVs based on the current user request condition. We observe that, regardless of the environmental state, the end positions of the two UAVs tend to be located in areas with high user density, indicating the stability of the R\textsuperscript{2}DSAC algorithm.

\section{Conclusion}\label{Conclusion}
In this paper, we have studied the implementation of LLM agents for enabling low-carbon LAENets. Specifically, we have developed a HybridRAG-based LLM agent framework for carbon emission optimization in multi-UAV-assisted MEC networks. In this framework, we have developed HybridRAG by merging KeywordRAG, VectorRAG, and GraphRAG, enabling LLM agents to generate more precise carbon emission optimization problems through the effective retrieval of structural relational information. To solve the formulated problem, we have proposed the R\textsuperscript{2}DSAC algorithm, which incorporates diffusion entropy regularization and action entropy regularization to enhance policy learning and prevent suboptimal convergence. Furthermore, we have designed a dynamic pruning module that masks unimportant neurons in the diffusion-based actor network, thereby reducing carbon emissions during model training. Simulation results demonstrate the effectiveness of the proposed framework and algorithm. For future work, we plan to explore prompt engineering techniques to optimize the interaction process between network designers and LLM agents. Additionally, we aim to develop multi-agent diffusion model-based DRL algorithms to derive optimal strategies for carbon emission optimization in LANets.

\bibliographystyle{IEEEtran}
\bibliography{ref}
\end{document}